\documentclass[reqno,11pt]{amsart}

\usepackage[numbers,sort&compress]{natbib}
\usepackage{color}
\usepackage{graphicx}
\usepackage{tikz}
\usepackage{rotating}

\usepackage{geometry}
\usepackage{mathtools,amssymb,amsthm,mathrsfs,color,lineno,paralist,graphicx,float}
\usepackage[colorlinks,
linkcolor=blue,
anchorcolor=green,
citecolor=blue, 
]{hyperref}

\newtheorem{theorem}{Theorem}[section]
\newtheorem{lemma}[theorem]{Lemma}
\newtheorem{proposition}[theorem]{Proposition}

\newtheorem{remark}[theorem]{Remark}

\newcommand{\C}{{\mathbb C}}

\newcommand{\E}{{\mathbb E}}

\newcommand{\Q}{{\mathbb Q}}
\newcommand{\R}{{\mathbb R}}
\newcommand{\T}{{\mathbb T}}

\newcommand{\Z}{{\mathbb Z}}


\catcode`\@=12

\def\Empty{}
\newcommand\oplabel[1]{
	\def\OpArg{#1} \ifx \OpArg\Empty {} \else
	\label{#1}
	\fi}

\newcommand{\key}[1]{\par\noindent Keywords: #1}

\begin{document}
	\title[Cantor spectrum]{Cantor Spectrum via a Reducibility-Duality Bridge for the Mosaic Almost Mathieu Operator}
    \author{Jiawei He}
\address{
Fujian Key Laboratory of Financial Information Processing, Putian University\\
Fujian Putian, 351100, P.R. China} \email{hermit\_well@163.com}
    \author{Yuan Shan}
    \address{Corresponding author:
 Department of Mathematics, Nanjing Audit University, Nanjing 211815, China}
 \email{shanyuan@nau.edu.cn}

	\author{Yongjian Wang}
	\address{School of Mathematics and Statistics, Nanjing University of Science and Technology, Nanjing 210094, China}\email{wangyongjian@amss.ac.cn}
	%\date{\today}
	\begin{abstract}
  We study the mosaic Almost Mathieu operator, a quasiperiodic model that naturally admits a singular strip-Jacobi representation. By establishing a duality framework and extending the correspondence between the integrated density of states and the fibered rotation number to this setting, we obtain an effective reduction to $SL(2,\mathbb{R})$ cocycles. As a consequence, combining Aubry duality, reducibility theory, and the Moser--Pöschel argument, we prove that the spectrum is a Cantor set for all noncritical parameters.

\end{abstract}

\setcounter{tocdepth}{1}

\maketitle

\key{Cantor spectrum; Reducibility; Rotation number; Aubry duality. }

\section{Introduction}
We consider one-dimensional discrete Schr\"odinger 
operators on $\ell^2(\Z)$ of the form: 
\begin{equation}
		[\mathcal{H}_{v, \alpha,\theta}u](n):=u(n+1)+u(n-1)+v(\theta+n\alpha)u(n),\ \forall n\in\Z,\nonumber
	\end{equation}
where $\theta \in \mathbb{T}:=\mathbb{R} / \mathbb{Z}$ is the phase, $v: \mathbb{T} \rightarrow \mathbb{R}$ is the potential, and $\alpha \in \mathbb{R}\backslash \mathbb{Q}$ is the frequency. The spectrum of $\mathcal{H}_{v, \alpha, \theta}$, denoted $\sigma(\mathcal{H}_{v,\alpha,\theta})$, is a compact subset of $\mathbb{R}$ independent of $\theta$. The resolvent set $\R\backslash\sigma(\mathcal{H}_{v,\alpha,\theta}) $ is union of some open intervals,
which are called spectral gaps. 
We say that a Schr\"odinger operator has Cantor spectrum, if its spectrum $\sigma(\mathcal{H}_{v,\alpha,\theta})$ is a Cantor set. 
\subsection{History and main result}
  In the 1980s, a wave of intense research activity focused on almost-periodic operators swept the mathematical physics community, notably initiated by Simon's influential work  \cite{Simon1982}.
The central and most extensively studied special case is the Almost Mathieu operators (AMO), which take the following form: 
$$
\left[\mathcal{H}_{\lambda, \alpha, \theta} u\right]_n=u_{n+1}+u_{n-1}+2 \lambda \cos 2 \pi(\theta+n \alpha) u_n, \quad \forall n \in \mathbb{Z} .
$$
Originally introduced by Peierls~\cite{Peierls73} to model electrons in two-dimensional lattices 
under homogeneous magnetic fields~\cite{MagField55}, 
the AMO subsequently proved fundamental to Thouless et al.'s theory of the integer quantum Hall effect~\cite{TKNN}. In recent years, the AMO has attracted extensive research interest owing to its profound quantum physical background. The richness of its spectral theory, abundance of anomalous phenomena, and remarkable analytical tractability further establish it as a cornerstone for mathematical conjectures and counterexamples.

In this paper, we consider the following almost-periodic  operator on $\ell^2(\Z)$,
\begin{equation}\label{eq1}
		[\mathcal{H}_{\theta}u](n):=u(n+1)+u(n-1)+v(\theta,n)u(n)
	\end{equation}
	where 
\begin{equation}\label{potential}
v(\theta, n)=\left\{\begin{array}{ccc}
0 & n=1 & \bmod 2, \\
2 \lambda \cos 2 \pi(n \alpha+\theta) & n=0 & \bmod 2 .
\end{array}\right.
\end{equation}
Without loss of generality, assume $\lambda>0$. Originally introduced by Wang, Xia, Zhang, et al.~\cite{WangXXZZ23}, this model 
features site-dependent potentials (distinct values at even versus odd lattice sites) and 
is named \textit{mosaic AMO operators} by the authors.  The mosaic AMO demonstrates 
multiple mobility edges with self-duality breaking, extending the conceptual 
framework first established in the Su--Schrieffer--Heeger model~\cite{SSH88} 
and later observed in driven conformal field theory~\cite{wen21} among other contexts. The framework has already stimulated numerous physical investigations~\cite{Gonçalves23,Longhi24,liu21,zxc23,liu22}, with recent experimental realization achieved through hybrid integrated photonic devices~\cite{chang,Gao2024,THM}.

The mosaic AMO occupies a distinguished position among quasiperiodic operators due to the coexistence of discontinuous potentials, broken self-duality and mobility edges. Recently, the spectral properties of the mosaic model (\ref{eq1}) were comprehensively analyzed by Wang, Xia, You, Zheng, and Zhou~\cite{exactm}. Their work specifically revealed the existence of mobility edges---critical energies separating absolutely continuous and pure point spectral components. While the spectral type of the model has been extensively investigated in recent years, the topological structure of its spectrum remained unknown.

The frequency $\alpha$ is called Diophantine (denoted by $DC(\gamma, \tau)$), if there exist $\gamma, \tau>0$, such that
$$
\inf _{j \in \mathbb{Z}}|n \alpha-j| \geq \frac{\gamma}{|n|^\tau} \quad \forall n \in \mathbb{Z} \backslash\{0\}.
$$
We are now prepared to present our main result.
\begin{theorem}\label{5-15-theorem1.1}
    If $\alpha \in DC(\gamma, \tau),$ then the spectrum of \eqref{eq1} is a Cantor set if $\lambda \neq 0$.
\end{theorem}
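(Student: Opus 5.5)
We will follow the strategy suggested by the abstract. We prove that every gap allowed by the gap labelling theorem is open, and that the integrated density of states (IDS) takes a dense set of values on the spectrum.

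\textbf{Step 1: reduction to an $SL(2,\R)$ cocycle.} Write the eigenvalue equation $\mathcal{H}_\theta u=Eu$ over two consecutive sites. The odd sites have zero potential, so combining the transfer matrices from an even site to the next even site gives a one-step cocycle
\[
(2\alpha, A_E), \qquad A_E(\theta)=\begin{pmatrix} E & -1\\ 1&0\end{pmatrix}\begin{pmatrix} E-2\lambda\cos2\pi\theta & -1\\ 1 & 0\end{pmatrix}
\]
over the rotation $\theta\mapsto\theta+2\alpha$. Note that $2\alpha$ is again Diophantine. We then relate the IDS $N(E)$ of \eqref{eq1} to the fibered rotation number $\rho(E)$ of this cocycle, expecting $N(E)=1-\rho(E)/\pi$ up to normalization. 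Gap labelling then says that in each gap, $2\rho(E)\in \Z+2\alpha\Z$ modulo $\Z$. Both $N$ and $\rho$ are continuous and monotone, and the spectrum is exactly the set where $N$ is nonconstant. So it suffices to show that for every $k$ with $2\rho(E)=\langle k,2\alpha\rangle$ mod $\Z$ and $E\in\sigma$, the point $E$ lies on the boundary of an open gap.

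\textbf{Step 2: duality and reducibility.} Relating the Fourier transform of eigenfunctions gives an Aubry-type dual of \eqref{eq1}. Because of the mosaic structure, this dual is a long-range (strip-Jacobi, possibly singular) operator rather than a Schr\"odinger operator. The proof then splits by regime.
\begin{itemize}
\item Subcritical regime (small $\lambda$ relative to the energy, through the explicit Lyapunov exponent formula of \cite{exactm}). We use almost reducibility for analytic cocycles with Diophantine frequency (Eliasson, Avila's almost reducibility theorem) to obtain analytic reducibility of $(2\alpha,A_E)$ whenever $2\rho(E)$ is rational with respect to $2\alpha$. The duality bridge helps here: localization of the dual operator at the dual energy produces a Bloch wave, which yields a conjugation to a constant parabolic matrix $\begin{pmatrix}1&c\\0&1\end{pmatrix}$.
\item Supercritical regime (positive Lyapunov exponent). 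We use uniform hyperbolicity arguments instead.
\end{itemize}

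\textbf{Step 3: Moser--Pöschel argument.} Once $A_E$ is conjugated to a parabolic constant with $c\neq0$, perturb $E\to E+\delta$ and average the perturbation, which has the form $\delta B(\theta)$ with $B$ explicit from the product structure. The averaged perturbation yields a hyperbolic constant matrix for $\delta$ of one sign, so $E$ is a gap edge. Two points must be verified. First, $c\neq0$: if $c=0$, the cocycle would be reducible to the identity, which gives two independent bounded solutions. Via duality this contradicts the simplicity of point spectrum of the dual operator. Second, the averaged term is nondegenerate, which follows from positivity of $\partial_E$ applied to the rotation number.

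\textbf{Main obstacle.} The hardest step is the positive-Lyapunov-exponent regime and the singular strip-Jacobi duality, where the reducibility argument does not apply directly. For energies with $L(E)>0$ but in the spectrum, we plan to argue as in Puig's proof for the AMO. Anderson localization for the supercritical part, which holds for Diophantine $\alpha$ and a.e.\ phase by \cite{exactm}, gives exponentially decaying eigenfunctions. Dualizing these produces reducible dual cocycles, and we then run Moser--Pöschel on the dual side. This requires the exact correspondence between the IDS of the original operator and the rotation number of the dual block system, which is the part of the reducibility-duality bridge we expect to cost the most.
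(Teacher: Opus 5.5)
Your treatment of the region $|E\lambda|<1$ matches the paper's Case 1: almost reducibility of $(2\alpha,D_E^v)$, parabolic reducibility at gap-label energies, $c\neq0$ via two independent $\ell^2$ dual solutions with constant Wronskian, then Moser--P\"oschel. The genuine gap is the region $|E\lambda|>1$, where neither of your two proposed mechanisms works.

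\emph{Uniform hyperbolicity.} These arguments cannot help there. $E\in\Sigma$ exactly when the cocycle is \emph{not} uniformly hyperbolic, so a positive Lyapunov exponent on $\Sigma$ says nothing about gaps inside $\Sigma$.

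\emph{The Puig-type fallback.} This also fails as stated. Dualizing an exponentially localized eigenfunction of $\mathcal{H}_\theta$ with eigenvalue $E$ gives a Bloch wave with Floquet exponent $\theta$. That yields reducibility of the dual cocycle only at that particular $E$, to a constant whose rotation is fixed by $\theta$. To open the gap labelled $k$ you need nontrivial parabolic reducibility at the energy whose rotation number is $k\alpha$ modulo $\Z$. This forces $\theta$ into a countable set of phases resonant with $\alpha$. Localization for almost every phase, which is what you cite, gives no information about those phases. This is why Puig's argument for the almost Mathieu operator uses localization at the specific phase $\theta=0$; you have no such input for the mosaic model.

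\emph{The missing idea.} No localization is needed. The dual $\widehat{\mathcal S}_\theta$ is a \emph{singular} strip operator with hopping $\widehat C=\mathrm{diag}(\lambda,0)$. So for $E\neq0$ the second component is slaved to the first, $b(n)=E^{-1}(1+e^{2\pi i(2n\alpha+\theta)})a(n)$. The eigenvalue equation then collapses to the analytic $\mathrm{SL}(2,\R)$ cocycle $(2\alpha,\hat A^E)$ with $\hat A^E(\theta)=\begin{pmatrix}\frac{E^2-2-2\cos2\pi\theta}{E\lambda}&-1\\1&0\end{pmatrix}$.

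Quantization of acceleration, together with the fact that $\Sigma\setminus\{0\}$ is exactly the non-uniformly-hyperbolic set of this cocycle, gives $L(2\alpha,\hat A^E)=\max\{-\ln|E\lambda|,0\}$ on $\Sigma\setminus\{0\}$. Hence $(2\alpha,\hat A^E)$ is \emph{subcritical} exactly when $|E\lambda|>1$. On the supercritical part of the original operator, you can therefore rerun your subcritical machinery on the dual side:
\begin{itemize}
\item almost reducibility;
\item parabolic reducibility when $\rho(2\alpha,\hat A^E)-k\alpha\in\Z$ with $|k|$ large;
\item $\varphi\neq0$ by the same Wronskian argument;
\item Moser--P\"oschel.
\end{itemize}

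Transporting gap labels back to $\Sigma$ requires the identity you flag as costly. The paper proves it with an interlacing lemma adapted to the singular structure: there is no eigenvalue at $0$ for $\theta\in\T_0$, and the interlacing pattern switches at $0$. The outcome is $\hat n(E)=1-\rho(2\alpha,\hat A^E)$ for $E>0$ and $\hat n(E)=\frac12-\rho(2\alpha,\hat A^E)$ for $E<0$.

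Without this reduction to a subcritical $\mathrm{SL}(2,\R)$ cocycle, or some substitute giving parabolic reducibility at the gap-label energies, your argument only shows that $\Sigma\cap(-\lambda^{-1},\lambda^{-1})$ has empty interior.
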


Before we  give some comments on why Theorem \ref{5-15-theorem1.1} is interesting, let us turn to  the almost Mathieu operators. The AMO has long been conjectured  to possess Cantor spectrum for irrational $\alpha$ (known as Azbel conjecture \cite{Azbel1964} and Ten Martini Problem \cite{Simon2000}). This conjecture was definitively resolved by Avila and Jitomirskaya \cite{AJ08}.Significant progress on the Ten Martini Problem has been driven by many outstanding works, including foundational contributions by Belissard--Simon~\cite{BS1982}, Sinai~\cite{S87}, Choi--Elliott--Yui~\cite{CEY1990}, and Puig~\cite{P04}, among others.
 Inspired by Hofstadter's numerical results~\cite{Ho1976}, Kac posited that all spectral gaps of the AMO are open---a stronger formulation termed the ``Dry Ten Martini Problem''~\cite{Simon1982}. This stronger version has been resolved for the noncritical case ($\lambda \neq 1$) by Avila--You--Zhou~\cite{AYZ}.
Recently, Ge-Jitomirskaya-You \cite{GJY} proved that Cantor spectrum is also present for AMO under analytic perturbations. Their proof method establishes Cantor spectrum without relying on almost Mathieu symmetry, self-duality, or low-degree potentials. It applies universally to all irrational frequencies and works for a large open set of analytic potentials—termed Type I potentials.
 
Cantor spectrum was conjectured to be a generic phenomenon for one-dimensional almost-periodic operators by Simon~\cite{Simon1982}. The analysis becomes significantly more challenging for Schr\"odinger operators with general analytic potentials. In the region
of positive Lyapunov exponent, Goldstein and Schlag \cite{GS11} proved that for
any analytic potential, the spectrum is a Cantor set for almost every frequency. In the perturbative regime, Eliasson \cite{E92} proved that for a fixed
Diophantine frequency, the spectrum is a Cantor set for generic such potentials in the standard analytic topology. This result was subsequently extended by Puig~\cite{puig06} to quasi-periodic Schr\"odinger operators with fixed Diophantine frequencies and potentials in the non-perturbative regime.

We further survey results concerning Cantor spectrum beyond the analytic category. In the finitely differentiable setting, Eliasson's perturbative theorem was extended by Cai and Ge \cite{CAIGE22}, while large-coupling Cantor spectrum results were obtained for smooth cosine-type potentials by Sinai \cite{S87}, Wang and Zhang \cite{WZ}, and Ge,Wang and Xu \cite{GWX}. In Gevrey classes, Hou, Shan and You \cite{HSY} developed a constructive scheme based on reducibility and gap-opening arguments, leading to explicit examples of quasiperiodic Schrödinger operators with Cantor spectrum. Related developments and refinements were subsequently obtained in \cite{HHSY,HZ}. In the continuous topology, Avila, Bochi and Damanik \cite{ABD09} established generic Cantor spectrum through gap-opening techniques. Related developments have also appeared in the setting of Sturmian Hamiltonians \cite{BBL} and almost-periodic CMV matrices \cite{DL,LDZ}.These results provide strong evidence that Cantor spectrum is a pervasive phenomenon in one-dimensional almost-periodic spectral theory.

The mosaic model differs substantially from the quasiperiodic Schr\"odinger operators considered in the works above. Indeed, it may be naturally regarded as a Schrödinger operator generated by a product dynamical system \cite{DFG}, whose site-dependent sampling structure gives rise to a singular strip-Jacobi representation. Moreover, the exact self-duality that underlies many classical results for the almost Mathieu operator and related models is absent. Consequently, the mosaic AMO falls outside the reach of the standard analytic quasiperiodic framework, and neither the classical Aubry-duality approach nor the existing gap-opening arguments can be applied in a straightforward manner.

The goal of this paper is to establish the Cantor property for the mosaic AMO and thereby complete the understanding of its spectral structure. Our main contribution is the development of a duality framework connecting the mosaic AMO with singular strip Jacobi operators. Together with reducibility theory, rotation number analysis, and the Moser--Pöschel argument, this framework allows us to open sufficiently many spectral gaps predicted by the Gap Labeling Theorem. We expect that these ideas may also prove useful in the study of spectral topology for other almost-periodic models with singular structures, mobility edges, or the absence of classical self-duality.

\subsection{Major Difficulties}

The proof of Theorem \ref{5-15-theorem1.1} meets three principal difficulties. First, the mosaic model does not fit into the standard quasiperiodic Schrödinger framework and therefore lies beyond the scope of the classical Aubry-duality theory. Second, after reformulating the model as a singular strip Jacobi operator, one faces the absence of a suitable reducibility theory for higher-dimensional cocycles. Third, the identification of the integrated density of states with the fibered rotation number becomes more delicate in the singular strip setting.

The first difficulty originates from the site-dependent structure of the mosaic model. Although Aubry duality has proved fundamental in the study of quasiperiodic operators, including localization-delocalization transitions \cite{AJ1,AJM,exactm,wang2022} and Cantor spectrum problems \cite{P04,GJY}, the standard formulation does not apply in the present setting. To overcome this obstacle, we reinterpret the mosaic AMO as a singular strip Jacobi operator and develop an appropriate generalized Aubry-duality framework.

The second difficulty concerns reducibility. Our approach relies on Avila's Almost Reducibility Conjecture \cite{A4,Ge2023}, whose available formulations apply only to $\mathrm{SL}(2,\mathbb{R})$ cocycles. However, the strip-Jacobi representation naturally leads to higher-dimensional transfer matrices. We resolve this issue by establishing a connection between $\mathrm{SL}(2,\mathbb{R})$ cocycles and strip Jacobi operators, thereby transferring reducibility information to the dual setting.

The third difficulty concerns the relation between the integrated density of states and the fibered rotation number. Recent works \cite{liwu,wang2022} established such relations for broad classes of strip Jacobi operators. However, our singular setting requires a refined analysis of eigenvalue distributions for truncated operators, where repeated eigenvalues may occur and the standard interlacing arguments are insufficient. We overcome this difficulty by developing a modified eigenvalue counting argument, ultimately establishing the correspondence needed for the gap-opening analysis.

\subsection{Outline of the paper} 
The remainder of this work is structured as follows. Section 2 presents preliminary definitions and foundational results. Section 3 introduces the framework and key notions of Aubry duality for strip Jacobi operators. Section 4 establishes the equivalence between the integrated density of states and the fibered rotation number for the singular strip Jacobi operator. Section 5 develops reducibility theorems in the subcritical regime and the Moser-P\"oschel argument. Finally, Section 6 completes the proof of Theorem \ref{5-15-theorem1.1} by contradiction, unifying reducibility, Aubry duality and analytic spectral methods.

\section{Preliminaries}

Denote by $C^{\omega}_{h}(\mathbb{T}, *)$ the set of $*$-valued functions ($*$ could be $\mathbb{R}$, $\mathrm{sl}(2,\mathbb{C})$ or $\mathrm{SL}(2,\mathbb{C})$) admitting a holomorphic extension to $\{|\Im \theta|<h\}$. For any $F\in C^{\omega}_{h}(\mathbb{T}, *)$, it has a Fourier expansion

\begin{equation*}
	F(\theta) = \sum_{k \in \mathbb{Z}} \widehat{F}(k) \mathrm{e}^{2\pi \mathrm{i} k \theta}, \quad \widehat{F}(k) := \int_{\mathbb{T}} F(\theta) \mathrm{e}^{-2\pi \mathrm{i} k \theta} \mathrm{d} \theta.
\end{equation*}
Given $h \geqslant 0$, we introduce the Wiener norm of $F $ as
\begin{equation*}
	\|F\|_{h}:=\sum_{k\in\mathbb{Z}} |\widehat{F}(k)| \mathrm{e}^{2\pi|k| h}.
\end{equation*}

\iffalse
\subsection{Continued Fraction Expansion.}\quad 
 When $\theta\in\mathbb{R}$, we also set $\|\theta\|_{\mathbb{T}}=\inf_{j\in\mathbb{Z}}|\theta-j|$. Let $\alpha\in(0,1)$ be irrational, $a_{0}=0$ and $b_{0}=\alpha$. Inductively, for $k\ge1$, we define
\begin{equation*}\quad
a_{k}=\lfloor b_{k-1}^{-1}\rfloor,\ b_{k}=b_{k-1}^{-1}-a_{k},
\end{equation*}
Let $p_{0}=0$, $p_{1}=1$, $q_{0}=1$, $q_{1}=a_{1}$. We define inductively $p_{k}=a_{k}p_{k-1}+p_{k-1}$, $q_{k}=a_{k}q_{k-1}+q_{k-2}$. Then $(q_{n})_{n}$ is the sequence of denominators of the best rational approximations of $\alpha$, since we have $\|k\alpha\|_{\mathbb{T}}\ge\|q_{n-1}\alpha\|_{\mathbb{T}}$, $\forall\ 1\le k<q_{n}$, and 
\begin{equation*}\quad
\frac{1}{2q_{n+1}}\le\|q_{n}\alpha\|_{\mathbb{T}}\le\frac{1}{q_{n+1}}.
\end{equation*}

\begin{lemma}\cite{AJ}\label{ten}
	Let $\alpha\in\mathbb{R}\backslash\mathbb{Q}$, $x\in\mathbb{R}$ and $0\le l_0\le q_n-1$ be such that $$|\sin\pi(x+l_0\alpha)|=\inf_{0\le l\le q_n-1}|\sin\pi(x+l\alpha)|,$$ then for some absolute constant $C>0$,
	$$-C\ln q_n\le\sum_{0\le l\le q_n-1,l\neq l_0}\ln|\sin\pi(x+l\alpha)|+(q_n-1)\ln2\le C\ln q_n.$$
\end{lemma}
\fi

\subsection{Cocycle, Lyapunov exponent} \label{co-ly}
Let $X$ be a compact metric space, $(X, \nu, T)$ be ergodic. A cocycle $(\alpha, A)\in \R\backslash
\Q\times C^\omega(X, M(2,\R))$ is a linear skew product:
\begin{eqnarray*}\label{cocycle}
(T,A):&X \times \R^2 \to X \times \R^2\\
\nonumber &(x,\phi) \mapsto ( T x,A(x) \cdot \phi).
\end{eqnarray*}
The $n$-step transfer matrix $A_n : \mathbb{T} \to \mathrm{SL}(2,\mathbb{R})$ for the cocycle $(T, A)$ is defined recursively.
For $n\in\mathbb{Z}$, $A_n$ is defined by $(T,A)^n=(T^n,A_n)$. Thus $A_{0}(x)=id$,
\begin{equation}\label{n-step transfer}
A_{n}(x)=\prod_{j=n-1}^{0}A(T^{j}x)=A(T^{n-1}x)\cdots A(Tx)A(x),\ for\ n\ge1,
\end{equation}
and $A_{-n}(x)=A_{n}(T^{-n}x)^{-1}$. The Lyapunov exponent is defined as
\begin{equation*}\quad
L(T,A)=\lim_{n\rightarrow\infty}\frac{1}{n}\int_{X}\ln\|A_{n}(x)\|dx.
\end{equation*}
In this paper, we will consider the following   cocycles:
\begin{itemize}
\item  $X=\mathbb{T}$ and $T=R_\alpha$, where  $R_{\alpha}\theta= \theta+\alpha$, then $(\alpha,A):=(R_{\alpha},A)$ is a  quasi-periodic cocycle.
\item  $X=\mathbb{T}\times\mathbb{Z}_{2}$ and $T=T_{\alpha}$, $T_{\alpha} (\theta,n):= (\theta+\alpha,n+1)$, then $(T_{\alpha},A)$ defines an almost-periodic cocycle.
\end{itemize}
These dynamical  systems $(X,T)$ is uniquely ergodic if  $\alpha$ is irrational (Theorem 9.1 of \cite{mane2012ergodic}).

Since $v$ in \eqref{potential} is defined on $\mathbb{T} \times \mathbb{Z}_2$, consequently the operator \eqref{eq1} induces an almost-periodic Schrödinger cocycle $\left(T_\alpha, S_E^v\right)$ where $T_\alpha(\theta, n)=(\theta+\alpha, n+1)$. Although $\left(T_\alpha, S_E^v\right)$ is not a quasi-periodic cocycle, its iterate
$$
\left(2 \alpha, D_E^v\right)=:\left(2 \alpha, S_E^v(\theta, 1) S_E^v(\theta, 0)\right)
$$
defines an analytic quasi-periodic cocycle. Moreover, we have
$$
D_E^v(\theta)=\left(\begin{array}{cc}
E^2-1-2 \lambda E \cos 2 \pi \theta & -E \\
E-2 \lambda \cos 2 \pi \theta & -1
\end{array}\right) .
$$

We say an $SL(2,\mathbb{R})$ cocycle $(T,A)$  is uniformly hyperbolic if, for every $x \in X$, there exists a continuous splitting $\mathbb{R}^2=E_{s}(x)\oplus E_{u}(x)$ such that for every $n\ge0$,
\begin{equation*}\quad
\begin{split}
|A_{n}(x)v(x)|&\le Ce^{-cn}|v(x)|,\ v(x)\in E_{s}(x),\\
|A_{-n}(x)v(x)|&\le Ce^{-cn}|v(x)|,\ v(x)\in E_{u}(x),
\end{split}
\end{equation*}
for some constant $C,c>0$. Clearly, it holds that $A(x)E_{s}(x)=E_{s}(Tx)$ and $A(x)E_{u}(x)=E_{u}(Tx)$ for every $x\in X$, and if $(T,A)$ is uniformly hyperbolic, then $L(T,A)>0$.

\subsection{Fibered rotation number.}
 Let $\mathbb S^{1}$ be the set of unit vectors of $\mathbb{R}^{2}$, consider a projective cocycle $F_{A}$ on $X\times\mathbb{S}^1$: $$(x,\phi)\mapsto (Tx,\frac{A(x)\phi}{\|A(x)\phi\|}).$$
If  $A\in C^0(\T, \mathrm{SL}(2,\R))$ is homotopic to the
identity,    then there exists
a lift $\tilde{F}_{A}$ of $F_{A}$ to $X \times \mathbb{R}$ such that $\tilde{F}_{A}(x,\phi) = (Tx, \tilde{f}_{A}(x,\phi)) $  where $\tilde{f}_{A} :X \times \mathbb{R}\rightarrow\mathbb{R}$ is a continuous lift such that
\begin{itemize}
  \item $\tilde{f}_{A}(x,\phi + 1) = \tilde{f}_{A}(x,\phi) + 1;$
  \item for every $x \in X, \tilde{f}_{A}(x, \cdot) : \R \rightarrow \R$ is a strictly increasing homeomorphism;
  \item if $\pi_2$ is the projection map  $X \times \mathbb{R}\rightarrow X \times \mathbb{S}^1:(x,\phi) \mapsto (x, e^{2\pi i\phi})$, then $F_{A} \circ \pi_2 = \pi_2 \circ \tilde{F}_{A}$.
\end{itemize}
If $(X, \nu, T)$ is uniquely ergodic, then the number 
\begin{equation*}
\rho(T,A) = \lim_{n\rightarrow \infty}\frac{\tilde{f}_{A}^{n}(x,\phi) -\phi}{n} \mod \Z
\end{equation*}
is independent of  $(x,\phi) \in X \times \mathbb{R/Z}$ and  the lift of $F_{A}$,  and is called the {\it fibered rotation
number} of $(T,A)$, see \cite{johnson1982rotation,herman1983methode} for details. If $X=\mathbb{T}$ and $T=R_\alpha$, i.e. when we are dealing with quasi-periodic cocycles, we will simply denote its fiber rotation number as $\rho(\alpha, A)$.
\iffalse
The fibered rotation number is invariant under real conjugacies which are homotopic to the identity. In general, if the cocycles $(\alpha,A_1)$ is conjugated to $(\alpha,A_2)$: 
$$B(\theta+\alpha)^{-1}A_1(\theta)B(\theta)=A_2(\theta),$$ and $B \in
C^0(\T,$ $	P\mathrm{SL}(2,\R))$ has degree n (that is, it is homotopic to $\theta \mapsto R_{n\theta/2}$), where 
\begin{equation*}
R_{\phi}=\begin{pmatrix}\cos2\pi\phi& -\sin2\pi\phi\\ \sin2\pi\phi& \cos2\pi\phi\end{pmatrix},
\end{equation*}
then  we have
\begin{equation}
\rho({\alpha,A_{1}})=\rho(\alpha,A_{2})+\frac{1}{2}n\alpha \mod\ \mathbb{Z}.
\end{equation}
If furthermore $B \in C^0(\T,$ $\mathrm{SL}(2,\R))$ with  $\deg B=n \in\Z$, then we
\begin{equation*}
\rho({\alpha,A_{1}})=\rho(\alpha,A_{2})+n\alpha \mod \Z.
\end{equation*}
\fi

\subsection{Dynamical  defined Schr\"odinger operators.} 
Let $X$ be a compact metric space, $(X,\nu,T)$ be ergodic, and $v: X \rightarrow \R$ is continuous. The associated Schr\"odinger operator on $\ell^2(\mathbb{Z})$ is defined by:
\begin{equation*}
[\mathcal{H}_{x}u]_{n}=u_{n+1}+u_{n-1}+v(T^n x )u_{n},\ \ \forall x \in X. 
\end{equation*}
The spectrum $\sigma(\mathcal{H}_x)$ is a compact subset of $\mathbb{R}$ and, when $(X,T)$ is minimal, is independent of $x$ \cite{damanik2017schrodinger}.

 Any formal solution $u=(u_{n})_{n\in\mathbb{Z}}$ to $\mathcal{H}_{x}u = Eu$ satisfies the recurrence relation:
\begin{equation*}
\begin{pmatrix} u_{n+1} \\ u_n \end{pmatrix} = S_E^v(T^n x) \begin{pmatrix} u_n \\ u_{n-1} \end{pmatrix},
\end{equation*}
with the \textit{Schr\"odinger cocycle} 
\begin{equation}\label{eq:cocycle}
S_E^v(\cdot) = \begin{pmatrix} E - v(\cdot) & -1 \\ 1 & 0 \end{pmatrix}.
\end{equation}
A fundamental result states that $E \notin \sigma(\mathcal{H}_x)$ if and only if $(T, S_E^v)$ is uniformly hyperbolic \cite{johnson1986exponential}.

Given an operator \(\mathcal{H}_{x} \) and a \( \phi \in \ell^2(\mathbb{Z}) \), we define a measure \( \mu_{x}^\phi \) on \(\sigma(\mathcal{H}_{x}) \) such that 
\[
\langle \phi, f((\mathcal{H}_{x}) \phi \rangle = \int_{\sigma(\mathcal{H}_{x})} f(E) \, d\mu_{x}^\phi(E),
\]
holds for any \( f \in C^0(\sigma(\mathcal{H}_{x})) \). The measure \( d\mu_x = d\mu_{x}^{e_0} + d\mu_{x}^{e_1} \) is called the spectral measure of \(\mathcal{H}_{x} \).

The \textit{integrated density of states} (IDS) $n: \mathbb{R} \to [0,1]$ is given by:
\begin{equation}\label{eq:ids}
n(E) = \int_X \mu_x(-\infty, E]  d\nu,
\end{equation}
$n(E)$ is always monotone and continuous regardless of \( d\mu_x \).
  Recall that a \textit{spectral gap} is any bounded connected component of $\mathbb{R} \setminus \sigma(\mathcal{H}_{x})$.  The rotation number $\rho(T, S_E^v)$, well-defined due to homotopy equivalence of $\theta \mapsto S_E^v(\theta)$ to the identity, provides deep connections between spectral and dynamical properties. Crucially, for 1D Schr\"odinger operators, the IDS directly corresponds to this rotation number:

\begin{theorem}\cite{herman1983methode,Marx2017,johnson1986exponential,johnson1982rotation}
\label{thm:rotation_ids}
 For quasiperiodic Schr\"odinger operators, the following properties hold:
\begin{enumerate}
    \item \textbf{(Gap Labeling)} For each spectral gap $G_k(V) = (E_k^-(V), E_k^+(V))$, there exists unique $k \in \mathbb{Z}$ such that
    \begin{equation}\label{eq:gap_label}
    n(E) \equiv \frac{\langle k, \alpha \rangle}{2} \pmod{\mathbb{Z}} \quad \forall E \in G_k(V)
    \end{equation}
    
    \item \textbf{(Rotation-IDS Relation)} The fibered rotation number satisfies $\rho(T, S_E^v) \in \left[0, \frac{1}{2}\right]$ with
    \begin{equation}\label{eq:rho_ids}
    n(E) = 1 - 2\rho(T, S_E^v)
    \end{equation}
    for all $E \in \mathbb{R}$.
\end{enumerate}

\end{theorem}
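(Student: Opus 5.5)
The plan is to prove the Rotation--IDS relation (2) first and then derive Gap Labeling (1) from it. The bridge is the discrete Sturm oscillation theory for finite truncations together with the topological rigidity of uniformly hyperbolic cocycles.

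\emph{Step 1 (truncations).} Fix $x\in X$ and let $\mathcal{H}_x^{[1,N]}$ be the restriction of $\mathcal{H}_x$ to $\ell^2(\{1,\dots,N\})$ with Dirichlet conditions, and set
\[
N_N(E)=\frac1N\#\{\text{eigenvalues of } \mathcal{H}_x^{[1,N]} \text{ that are } \le E\}.
\]
By unique ergodicity, $\frac1N\operatorname{tr}f(\mathcal{H}_x^{[1,N]})\to\int_X\langle\delta_0,f(\mathcal{H}_x)\delta_0\rangle\,d\nu$ for continuous $f$, with boundary effects of order $O(1/N)$. Hence $N_N(E)\to n(E)$ at every $E$, using that $n$ is continuous. (The paper's $d\mu_x$ sums over $e_0,e_1$; this only changes the normalization, and I would follow the normalization that makes $n$ run from $0$ to $1$.)

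\emph{Step 2 (Sturm oscillation).} Let $u(E)$ be the solution with $u_0=0,\ u_1=1$. The eigenvalues of $\mathcal{H}_x^{[1,N]}$ are exactly the zeros in $E$ of $u_{N+1}(E)$, and they interlace with those of $u_N(E)$. By the discrete Sturm comparison theorem, the number of eigenvalues of $\mathcal{H}_x^{[1,N]}$ that are $\le E$ equals the number of sign changes of $(u_1(E),\dots,u_{N+1}(E))$. A sign change between $u_n$ and $u_{n+1}$ is exactly a crossing of the vertical axis by the angle of $(u_{n+1},u_n)^T$, and these crossings all occur in one direction (clockwise) for the Schrödinger matrix $S_E^v$. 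Therefore
\[
\#\{\text{eigenvalues}\le E\}=2\bigl|\tilde f^N_{S_E^v}(x,\phi_0)-\phi_0\bigr|+O(1).
\]
The sign convention and the constant are fixed by checking $E\ll\inf\sigma$: there the cocycle is uniformly hyperbolic with no rotation, so $\rho=\tfrac12$ in the paper's normalization and $n=0$. This yields $n(E)=1-2\rho(T,S_E^v)$ and the range $\rho\in[0,\tfrac12]$. I expect this step to be the main obstacle: the winding count must hold uniformly in $N$ with an $O(1)$ error, and the normalization of the lift (mod $\Z$) must be pinned down so that $\rho$ depends continuously and monotonically on $E$.

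\emph{Step 3 (gap labels).} If $E$ lies in a gap, then $n$ is constant on the gap, since $\mu_x$ gives the gap no mass, and hence so is $\rho$. By Johnson's theorem $(T,S_E^v)$ is uniformly hyperbolic, so the unstable direction $E_u(\theta)$ is a continuous invariant section $\mathbb{T}\to\mathbb{RP}^1$. Choose $B\in C^0(\mathbb{T},\mathrm{PSL}(2,\R))$ whose first column spans $E_u$. Then $B(\theta+\alpha)^{-1}S_E^v(\theta)B(\theta)$ is upper triangular, and a cocycle of that form has rotation number $0$. Let $k$ be the degree of $B$. The standard conjugation formula $\rho(\alpha,A_1)=\rho(\alpha,A_2)+\tfrac{k}{2}\alpha \bmod \Z$ then gives $\rho\equiv \tfrac{k\alpha}{2}$, and Step 2 converts this into $n(E)\equiv\langle k,\alpha\rangle/2 \pmod{\Z}$ up to sign. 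Uniqueness of $k$ follows from the irrationality of $\alpha$.
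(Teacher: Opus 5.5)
Your architecture (convergence of truncated eigenvalue counts, an oscillation count that turns eigenvalue counting into winding of the projective lift, then triangularization of the uniformly hyperbolic cocycle along $E_u$ with a degree count) is the standard proof behind the references the paper cites. The paper itself gives no proof of this statement. Its only analogous argument is Section~4 for $\widehat{\mathcal{S}}_\theta$: there the winding is counted in the energy variable, with a quarter turn of $(\Delta_n(E),\Delta_{n-1}(E))$ per interlacing interval of the $[1,n]$ and $[1,n-1]$ truncations and constants fixed at $E\to\pm\infty$. You instead count sign changes in $n$ at fixed $E$, which is equally valid. However, two steps are wrong as written.

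\emph{Step 2 has the oscillation count reversed.} Since $\mathcal{H}_x$ has off-diagonal entries $+1$, the number of sign changes of $(u_1,\dots,u_{N+1})$ equals the number of eigenvalues of $\mathcal{H}_x^{[1,N]}$ \emph{greater} than $E$. Already for $N=1$, $u_2=E-v_1$ changes sign iff $E<v_1$; equivalently, by Perron--Frobenius the top eigenvector has no sign change. The crossings are also counterclockwise, not clockwise. Indeed $S_E^v$ maps the right half-plane into the upper half-plane, so in the lift (angles in radians) each step takes $[-\tfrac{\pi}{2},\tfrac{\pi}{2}]$ into $[0,\pi]$, crossing at most one vertical line and only upward. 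That observation is also all you need for the uniform $O(1)$ error you flagged as the main obstacle. Your display, divided by $N$, gives $n=2\rho$, hence $n=1$ below the spectrum instead of $0$. The discrepancy with the true count grows linearly in $N$, so it cannot be absorbed by fixing a constant at one energy. The correct identity is
\[
\#\{\text{eigenvalues of }\mathcal{H}_x^{[1,N]}\le E\}=N-2\bigl(\tilde f^N_{S_E^v}(x,\phi_0)-\phi_0\bigr)+O(1),
\]
which yields $n(E)=1-2\rho(T,S_E^v)$ and $\rho\in[0,\tfrac12]$ directly, with no normalization step.

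\emph{Step 3 gets the label wrong.} An upper-triangular cocycle $\begin{pmatrix}a&b\\0&a^{-1}\end{pmatrix}$ has rotation number $0$ or $\tfrac12$ according to the sign of $a$. For instance, below the spectrum $k=0$ but $\rho=\tfrac12$ by your own normalization. Moreover, since $B$ is only $\mathrm{PSL}(2,\R)$-valued, the conjugated cocycle is determined only up to a global sign, which shifts $\rho$ by $\tfrac12$. What the argument actually gives is $2\rho\equiv k\alpha\pmod{\Z}$, and then $n(E)=1-2\rho\equiv -k\alpha\pmod{\Z}$, not $\pm\tfrac{k\alpha}{2}$. The factor $\tfrac12$ in the displayed gap label belongs to $\rho$; this is how the paper actually uses it in Section~6, through $2\rho-k\alpha\in\Z$. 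The literal display holds only after relabeling $k\mapsto -2k$, with uniqueness still following from the irrationality of $\alpha$. With these two repairs the argument goes through.
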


If \(G_{k}(V)\) is not empty for some $k$ , we say that spectral gap labelled by $k$ is  open.  If the closure of a spectral
gap degenerates to a point we will say that it is a collapsed gap.

\iffalse
By Thouless formula \cite{AS}, we also have the following relation between the integrated density of states and the Lyapunov exponent:
$$L(T,S_{E}^{v})=\int\ln|E-E'|d\mathrm{n}(E').$$	
\fi
\subsection{Global theory of one frequency quasi-periodic cocycle}
 Let us make a short review of Avila's global theory of one-frequency quasi-periodic cocycles \cite{A4}. Suppose that $D \in C^\omega(\mathbb{T}, M(2, \mathbb{C}))$ admits a holomorphic extension to $\{|\Im \theta|<h\}$. Then for $|\epsilon|<h$, we define $D_\epsilon \in C^\omega(\mathbb{T}, M(2, \mathbb{C}))$ by $D_\epsilon(\cdot)=A(\cdot+i \epsilon)$, and define the the acceleration of ($\alpha, D_{\varepsilon}$) as follows

$$
\omega\left(\alpha, D_{\varepsilon}\right)=\frac{1}{2 \pi} \lim _{h \rightarrow 0+} \frac{L\left(\alpha, D_{\varepsilon+h}\right)-L\left(\alpha, D_{\varepsilon}\right)}{h}.
$$

The acceleration was first introduced by Avila for analytic $\mathrm{SL}(2, \mathbb{C})$ cocycles \cite{A4}, and extended to analytic $M(2, \mathbb{C})$ cocycles by Jitomirskaya and Marx \cite{jitomirskaya2012analytic}. It follows from the convexity and continuity of the Lyapunov exponent that the acceleration is an upper semicontinuous function in parameter $\varepsilon$. The key property of the acceleration is that it is quantized:

\begin{theorem}\label{5-13-theorem2.1}(Quantization of acceleration)\cite{AJ1,jitomirskaya2012analytic,jit2013}Suppose that $(\alpha, D) \in$ $(\mathbb{R} \backslash \mathbb{Q}) \times C^\omega\left(\mathbb{T}, M_2(\mathbb{C})\right)$ with $\det D(\theta)$ bound away from 0 on the strip $\{|\Im \theta|<$ $h\}$, then $\omega\left(\alpha, D_{\varepsilon}\right) \in \frac{1}{2} \mathbb{Z}$ in the strip. Moreover, if $D \in C^\omega(\mathbb{T}, \mathrm{SL}(2, \mathbb{C}))$, then $\omega\left(\alpha, D_{\varepsilon}\right) \in \mathbb{Z}$.

\end{theorem}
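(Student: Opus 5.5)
The plan is to reduce Theorem~\ref{5-13-theorem2.1} to two facts: in a fixed strip the Lyapunov exponent is convex and piecewise affine in $\varepsilon$, and its slopes lie in a discrete set. I will write $L(\varepsilon):=L(\alpha,D_\varepsilon)$. Every $\ln\|(D_{\varepsilon})_n(\theta)\|$ is the restriction of a plurisubharmonic function of $z=\theta+i\varepsilon$, so $\frac1n\int\ln\|D_n(\theta+i\varepsilon)\|d\theta$ is convex in $\varepsilon$, and hence so is its limit $L(\varepsilon)$. The determinant is bounded away from $0$ on the strip, so $L(\varepsilon)$ is bounded below by $\frac12\int\ln|\det D(\theta+i\varepsilon)|d\theta$ and is finite. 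Convexity implies that the right derivative, and with it $\omega$, exists everywhere and is upper semicontinuous.

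Next I handle the determinant and reduce to the $\mathrm{SL}(2,\C)$ case. Since $\det D$ has no zeros in the strip, its winding number $m$ along the circles $\Im\theta=\varepsilon$ is constant. I write $\det D(\theta)=e^{2\pi i m\theta}g(\theta)$ with $g$ admitting an analytic logarithm. Then $\widetilde D:=(e^{2\pi i m\theta}g(\theta))^{-1/2}D$ lies in $\mathrm{SL}(2,\C)$ after possibly passing to the double cover $\theta\mapsto 2\theta$, which is where half-integers can appear. The function $\int\ln|\det D(\theta+i\varepsilon)|d\theta$ is affine in $\varepsilon$ with slope $-2\pi m$, because $\ln|g|$ is harmonic and its mean does not depend on $\varepsilon$. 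Therefore
$$L(\alpha,D_\varepsilon)=L(\alpha,\widetilde D_\varepsilon)-\pi m\varepsilon,$$
and $\omega(\alpha,D_\varepsilon)=\omega(\alpha,\widetilde D_\varepsilon)-m/2$, up to the factor coming from the double cover. It then suffices to prove integrality for $\mathrm{SL}(2,\C)$ cocycles, interpreted on $\T$ or on its double cover as appropriate.

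For the $\mathrm{SL}(2,\C)$ case I follow Avila's argument. First I approximate $\alpha$ by rationals $p/q$. For the $q$-periodic cocycle, $\frac1q\int\ln\rho(D_q(\theta+i\varepsilon))d\theta$ is the Lyapunov exponent, where $\rho$ is the spectral radius. Moreover $\ln\rho(D_q(\theta+i\varepsilon))=\ln|\lambda(\theta+i\varepsilon)|$, where $\lambda$ is an eigenvalue that depends algebraically on $z$. Averaging over $\theta$, the mean of $\ln|\cdot|$ of a $1$-periodic holomorphic function in $\varepsilon$ is piecewise affine with integer slopes given by Jensen's formula. The $\theta\mapsto\theta+1/q$ periodicity of the product then turns the slopes into multiples of $1/q\cdot q$, which are integers. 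Finally I pass to the limit $p/q\to\alpha$ using the continuity of $L$ in $(\alpha,D)$ at irrational $\alpha$, as proved by Bourgain--Jitomirskaya and Jitomirskaya--Koslover--Schulteis. A convex limit of piecewise affine convex functions with integer slopes still has its slopes in a discrete set, with one caveat: this requires a uniform bound on the slopes, which follows from $L$ being uniformly Lipschitz on compact sub-strips.

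The main obstacle is this limiting step. Without more care, integer slopes for the approximants only give slopes of the limit lying in the closed set $\Z$ at points of differentiability. Convexity does not automatically exclude a limit that is strictly convex, such as a curve with a continuum of slopes. Avila rules this out by showing that near each $\varepsilon$, $L$ is affine on either side, using the regularity/uniform hyperbolicity dichotomy. My fallback would be to use the fact that a convex function whose derivative, wherever it exists, takes values in $\Z$ must be piecewise affine, and that the right derivative equals the limit of derivatives from the right. With this, quantization follows directly from the limiting argument.
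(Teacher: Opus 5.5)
The paper does not prove this statement; it quotes Avila for the $\mathrm{SL}(2,\C)$ case and Jitomirskaya--Marx for the $M_2(\C)$ case. Your convexity step is fine. So is your reduction via $\det D=e^{2\pi i m\theta}g$ (with the double cover when $m$ is odd), and it correctly explains where the half-integers come from; you dropped the constant $\frac12\int_{\T}\ln|g|$ from the formula for $L$, which is harmless. The genuine gap is in the $\mathrm{SL}(2,\C)$ core. Write $D^{p/q}_{(q)}$ for the $q$-step product along $R_{p/q}$. You claim that $\varepsilon\mapsto L(p/q,D_\varepsilon)=\frac1q\int_{\T}\ln\rho(D^{p/q}_{(q)}(\theta+i\varepsilon))\,d\theta$ is piecewise affine with integer slopes ``by Jensen's formula''. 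This is false. Jensen applies to $\ln|f|$ with $f$ single-valued holomorphic, and the dominant eigenvalue is not such a function. Indeed $\ln\rho(D^{p/q}_{(q)}(z))=|\Re\operatorname{arccosh}(\frac12\mathrm{tr}\,D^{p/q}_{(q)}(z))|$ is harmonic only off the set where the trace lies in $[-2,2]$. There the two eigenvalues have modulus one and are exchanged, and this set can contain arcs transverse to the lines $\Im z=\varepsilon$ that carry Riesz mass. For instance, take $p/q=0/1$ and $A(\theta)=\begin{pmatrix}\cos2\pi\theta&-1\\1&0\end{pmatrix}$. The segments $\Re z\in\frac12\Z$, $\cosh(2\pi\Im z)\le 2$, lie in that set. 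A direct computation shows the right derivative of $\varepsilon\mapsto L(0,A_\varepsilon)$ at $0$ is $\frac{2\pi}{3}$, so the acceleration there is $\frac13$. The function is strictly convex on $(0,\frac{1}{2\pi}\operatorname{arccosh}2)$ and only then becomes affine with slope $2\pi$. Quantization genuinely fails at rational frequencies, so there is nothing integral to pass to the limit in the form you propose.

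You have also misplaced the obstacle. If the approximants really were convex and piecewise affine with integer slopes, your uniform Lipschitz bound would confine the slopes to a fixed finite set. Each approximant would then be a maximum of boundedly many affine functions with slopes in that set, and so would any uniform limit. No regularity or uniform-hyperbolicity input is needed, and Avila's proof of quantization uses none: his characterization of uniform hyperbolicity comes afterwards and depends on quantization.

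The missing idea is the right $\theta$-independent approximating quantity. Since $\mathrm{tr}\,D^{p/q}_{(q)}(z)=\sum_k a_k e^{2\pi i kqz}$ is $1/q$-periodic and $\max\{0,\ln|t|-\ln 2\}\le\ln\rho\le\ln(1+|t|)$, you can use Cauchy estimates $|a_k|\le 2e^{Cq-2\pi|k|qh'}$. On a compact sub-strip, with $K$ independent of $q$, they give $\sup_{\theta}\frac1q\ln\rho(D^{p/q}_{(q)}(\theta+i\varepsilon))=\max\{0,\max_{|k|\le K}(\frac1q\ln|a_k|-2\pi k\varepsilon)\}+O(\frac{\ln K}{q})$. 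Here the $1/q$-periodicity of the trace (not of the product) cancels the $\frac1q$ normalization and produces slopes in a fixed finite subset of $2\pi\Z$. You then need this supremum to converge to $L(\alpha,D_\varepsilon)$. It is at least the average $L(p/q,D_\varepsilon)$, which tends to $L(\alpha,D_\varepsilon)$ by continuity at irrational frequencies. It is at most $\sup_\theta\frac1q\ln\|D^{p/q}_{(q)}(\theta+i\varepsilon)\|\le L(\alpha,D_\varepsilon)+o(1)$, by uniform upper semicontinuity (unique ergodicity of $R_\alpha$ plus subadditivity, applied uniformly in nearby frequencies). With these two ingredients in place of the Jensen step, your limiting argument does give $\omega\in\Z$.
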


If $A$ takes values in $\mathrm{SL}(2, \mathbb{R})$, then $\varepsilon \mapsto L\left(\alpha, A_{\varepsilon}\right)$ is an even function. By convexity, $\omega(\alpha, A) \geq 0$. And if $\alpha \in \mathbb{R} \backslash \mathbb{Q}$, then $(\alpha, A)$ is uniformly hyperbolic if and only if $L(\alpha, A)>0$, and $\omega(\alpha, A)=0$. The cocycles in $\mathrm{SL}(2, \mathbb{R})$ which are not uniformly hyperbolic are classified into three regimes: subcritical, critical, and supercritical. Especially, $(\alpha, A)$ is said to be subcritical if $L(\alpha, A)=0, \omega(\alpha, A)=0$; the cocycle $(\alpha, A)$ is said to be supercritical if $L(\alpha, A)>0, \omega(\alpha, A)>0 ;$ otherwise $(\alpha, A)$ is critical.

A cornerstone in Avila’s global theory is the “Almost Reducibility Conjecture” (ARC),
which says that $(\alpha, A)$ is almost reducible if it is subcritical.

\begin{theorem}\cite{A4,Ge2023}\label{ARC}
     Given $\alpha \in \mathbb{R} \backslash \mathbb{Q}$, and $A \in C^\omega(\mathbb{T}, \mathrm{SL}(2, \mathbb{R}))$, if $(\alpha,A)$ is subcritical, then it is almost reducible.
\end{theorem}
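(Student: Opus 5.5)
This result is quoted from \cite{A4,Ge2023}. If I had to prove it, I would follow the duality route in the spirit of Ge~\cite{Ge2023}, using the global theory reviewed above; the plan below is a sketch of that route, not something checked in detail. The starting point is that subcriticality of $(\alpha,A)$ gives $L(\alpha,A_\varepsilon)=0$ for all $|\varepsilon|<h$ for some $h>0$. Indeed, $\omega(\alpha,A)=0$, the map $\varepsilon\mapsto L(\alpha,A_\varepsilon)$ is convex and even with $L(\alpha,A)=0$, and by Theorem~\ref{5-13-theorem2.1} the acceleration is integer valued and upper semicontinuous. Hence it vanishes on a neighbourhood of $0$, so $L$ is identically zero on a strip.

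The first step is a reduction to Schrödinger-type form. Away from uniform hyperbolicity, and after an analytic conjugacy and a small perturbation, I would replace the general $SL(2,\R)$ cocycle by a cocycle of Schrödinger type. More generally one can work with a Jacobi-type cocycle $\frac{1}{c(\theta)}\begin{pmatrix}E-v(\theta)&-\overline{c(\theta-\alpha)}\\ c(\theta)&0\end{pmatrix}$, which is where Ge's argument is needed. Almost reducibility is invariant under analytic conjugacy, so nothing is lost in this reduction.

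The second step passes to the Aubry dual: a long-range operator $\widehat H$ on $\ell^2(\Z)$ whose hopping terms are the Fourier coefficients of $v$, and hence decay like $e^{-2\pi h|k|}$. Vanishing of $L$ on the whole strip should translate, through a Thouless-type / Herman subharmonicity argument applied to the dual cocycle, into uniform exponential decay of the Green's function of $\widehat H$ on most boxes. From this I would prove quantitative almost localization: every generalized eigenfunction of $\widehat H$ at energy $E$ decays exponentially away from its resonances.

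The third step builds the conjugacy. From an almost localized dual eigenfunction $\hat u$, form $U(\theta)=\bigl(u(\theta),u(\theta-\alpha)\bigr)^T$, where $u(\theta)=\sum_k\hat u_k e^{2\pi ik\theta}$ is analytic on a slightly smaller strip. Following Avila--Jitomirskaya, $U$ almost solves the cocycle equation. Truncating $\hat u$ between consecutive resonances gives $B_n$ with $B_n(\theta+\alpha)^{-1}A(\theta)B_n(\theta)$ exponentially close to a constant, with $\|B_n\|$ controlled polynomially (Diophantine case) or subexponentially (general case). Iterating along the sequence of resonances, and finishing with a local KAM almost-reducibility theorem near constants, gives almost reducibility in a fixed strip.

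I expect the main obstacle to be the Liouvillean frequencies together with the non-Schrödinger case. For Liouvillean $\alpha$, resonances are dense and the dual localization bounds degenerate, so the constant in the Green's-function estimate must beat $\beta(\alpha)$. My first attempt would be to use Avila's continuity of $L$ in $(\alpha,A)$ and approximate by periodic (rational) cocycles, choosing the strip width adaptively along the continued fraction denominators so that the conjugacies stay in a uniform strip. Only then would I apply the Diophantine-type scheme on each scale.
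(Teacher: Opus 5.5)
Your first sentence is exactly the paper's treatment. Theorem~\ref{ARC} is imported from \cite{A4,Ge2023} with no proof given, and for this paper the citation is all that is needed. The paper only ever applies it with frequency $2\alpha$, $\alpha\in DC(\gamma,\tau)$, and Theorem~\ref{schr-reduc}, used right after it, requires a Diophantine frequency anyway. So the Liouvillean discussion in your last paragraph is not needed here.

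The sketch you append would not stand as a proof, however. Two points in it are wrong or missing.

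\emph{The reduction step.} The claim that ``nothing is lost'' is false as stated. Analytic conjugacy preserves almost reducibility, but a small perturbation does not. Almost reducibility of a nearby cocycle $A'$ tells you nothing about $A$ unless $\|A-A'\|$ is below a threshold set by the conjugacies of $A'$ (via an openness argument), and you have no control over those. You also cannot use the paper's Theorem~\ref{transtoschr} to put $A$ exactly into Schr\"odinger form with small potential. Its hypothesis is almost reducibility, which is the conclusion you want, so this would be circular. You would need either an exact analytic conjugacy (for instance to a Jacobi-type form) together with a duality theory for that form, or a separate argument reducing the general case to Schr\"odinger cocycles.

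\emph{The central analytic step.} The passage from ``$L(\alpha,A_\varepsilon)=0$ for $|\varepsilon|<h$'' to uniform Green's function decay for the dual long-range operator is asserted, not argued, and this is where essentially all the work lies. What is needed is a quantitative mechanism turning the strip width $h$ into a positive decay rate for the dual, uniformly in energy and phase. One example would be a lower bound proportional to $h$ on the nonnegative Lyapunov exponents of the dual cocycles of Fourier truncations of the potential (an Aubry duality for Lyapunov exponents). This would have to be combined with large deviation estimates for the dual determinants that are uniform in the truncation.

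Your remaining steps are sound once that input is available. Subcriticality does give $L\equiv 0$ on a strip, by quantization, monotonicity and upper semicontinuity of the acceleration. The Avila--Jitomirskaya construction of conjugacies from almost localized dual eigenfunctions, followed by local KAM, is the correct endgame.
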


\section{Aubry duality in strip Jacobi operators}
Aubry duality serves as a crucial tool in the spectral theory of quasi-periodic operators and has been used extensively in the analysis of localization-delocalization transitions \cite{AJ1,AJM, exactm,wang2022} and the Cantor spectrum problem\cite{P04,GJY}. Recent work \cite{wang2022} extends Aubry duality to strip Jacobi operators with singular hopping matrices. The operator defined in \eqref{eq1} can be precisely represented as such a strip Jacobi operator, enabling direct application of these extended duality results. Here we briefly outline the  framework and key notions of Aubry duality in strip Jacobi operators.
 
Consider the strip Jacobi operator on $\ell^2(\mathbb{Z}, \mathbb{C}^2)$ defined by
\begin{equation}\label{eq:strip_jacobi}
[\mathcal{S}_{C,V,\theta} \boldsymbol{u}](n) := C \boldsymbol{u}(n + 1) + C^* \boldsymbol{u}(n - 1) + V(\theta + 2n\alpha) \boldsymbol{u}(n),
\end{equation}
where $C \in M_2(\mathbb{R})$ is the \textit{hopping matrix} and $V: \mathbb{T} \to M_2(\mathbb{R})$ is the \textit{potential matrix}. The operator $\mathcal{S}_{C,V,\theta}$ is termed \textit{singular} when $\det(C) = 0$.
Assume the eigenvalue equation
$[\mathcal{S}_{C,V,\theta} \boldsymbol{u}](n) =E\boldsymbol{u}(n)$ admits an analytic quasi-periodic Bloch wave solution,
    $\boldsymbol{u}(n)=e^{2\pi in\theta}\widetilde{\boldsymbol{\psi}}(2n\alpha+\theta),$
where $\widetilde{\boldsymbol{\psi}}:\T\to\C^2$ being analytic and $\theta\in\T$ the Floquet exponent. Substitution yields the dual-space equation:
\begin{align*}
    \sum_{k\in\Z}V^{(k)}\boldsymbol{\psi}(n-k)+(Ce^{2\pi i(2n\alpha+\theta)}+C^*e^{-2\pi i(2n\alpha+\theta)})\boldsymbol{\psi}(n)=E\boldsymbol{\psi}(n),\quad n\in\Z,
\end{align*}
where $\{\boldsymbol{\psi}(n)\}$ and $\{V^{(k)}\}$ are Fourier coefficients of $\widetilde{\boldsymbol{\psi}}$ and $V$ respectively. This defines the dual operator:
$$[\widehat{\mathcal{S}}_{C,V,\theta}\boldsymbol{\psi}](n)=\sum_{k\in\Z}V^{(k)}\boldsymbol{\psi}(n-k)+(Ce^{2\pi i(2n\alpha+\theta)}+C^*e^{-2\pi i(2n\alpha+\theta)})\boldsymbol{\psi}(n).$$

Consider Hilbert space $\mathbb{H}=L^2(\T\times\Z,\C^2)$, which consists of functions $\boldsymbol{\Psi}=\boldsymbol{\Psi}(\theta,n)$ satisfying
\begin{align*}
\sum_{n\in\Z}\int_{\T}\|\boldsymbol{\Psi}(\theta,n)\|^2d\theta<\infty.
\end{align*}
The extensions of the singular operators $\mathcal{S}_{C,V,\theta}$ and their long-range duals $\widehat{\mathcal{S}}_{C,V,\theta}$ to $\mathbb{H}$ are given in terms of their {\em direct integrals}. The direct integral of the singular operator $\mathcal{S}_{C,V,\theta}$ is the operator $\mathbf{S}_{C,V}$, defined as
\begin{align*}
\mathbf{S}_{C,V}=\int_\T^{\oplus}\mathcal{S}_{C,V,\theta} d\theta:\int_\T^{\oplus}\ell^2(\Z,\C^2)d\theta\to\int_\T^{\oplus}\ell^2(\Z,\C^2)d\theta,
\end{align*}
and
$$[\mathbf{S}_{C,V}\boldsymbol{u}](\theta,n)=C\boldsymbol{u}(\theta,n+1)+C^{*}\boldsymbol{u}(\theta,n-1)+V(2n\alpha+\theta)\boldsymbol{u}(\theta,n).$$
Similarly, the direct integral of $\widehat{S}_{C,V,\theta}$ denoted as $\widehat{\mathbf{S}}_{C,V}$ is
\begin{align*}
[\widehat{\mathbf{S}}_{C,V}\boldsymbol{u}](\theta,n)=\sum_{k\in\Z}V^{(k)}\boldsymbol{\psi}(\theta,n-k)+(Ce^{2\pi i(2n\alpha+\theta)}+C^*e^{-2\pi i(2n\alpha+\theta)})\boldsymbol{\psi}(\theta,n).
\end{align*}
These two operators are bounded and self-adjoint in $\mathbb{H}$. 

Define the Aubry duality $\mathbf{A}_{2\alpha}$ on $\mathbb{H}$ as
\begin{align*}
[\mathbf{A}_{2\alpha}\boldsymbol{u}](\theta,n)=\sum_{k\in\Z}\int_\T e^{-2\pi i(\theta+2n\alpha)k}e^{-2\pi in\eta}\boldsymbol{u}(\eta,k)d\eta.
\end{align*}
Then, for any fixed real analytic $V$ and non-resonant frequency $2\alpha$, the direct integrals $\mathbf{S}_{C,V}$ and $\widehat{\mathbf{S}}_{C,V}$ are unitarily equivalent in the sense that the conjugation $
\mathbf{S}_{C,V}\mathbf{A}_{2\alpha}=\mathbf{A}_{2\alpha}\widehat{\mathbf{S}}_{C,V} $
holds.
Since $\mathbf{A}_{2\alpha}$ is an unitary operator, we have
\begin{align}\label{5-15-eq7}
    \sigma(\int^\oplus_\T \mathcal{S}_{C,V,\theta}d\theta)=\sigma(\int^\oplus_\T \widehat{\mathcal{S}}_{C,V,\theta}d\theta).
\end{align}

For the mosaic AMO operator  $\mathcal{H}_\theta$
  defined in \eqref{eq1}, we specialize the strip Jacobi operator to the form:
\begin{equation*}
[\mathcal{S}_{\theta} \boldsymbol{u}](n) = C \boldsymbol{u}(n + 1) + C^* \boldsymbol{u}(n - 1) + V(\theta + 2n\alpha) \boldsymbol{u}(n),
\end{equation*}
with
\begin{equation*}
C = \begin{pmatrix} 0 & 0 \\ 1 & 0 \end{pmatrix}, \quad 
V(\theta) = \begin{pmatrix} 2\lambda \cos(2\pi \theta) & 1 \\ 1 & 0 \end{pmatrix}.
\end{equation*}
The unitary equivalence $\mathcal{P} \mathcal{H}_{\theta} \mathcal{P}^{-1} = \mathcal{S}_{\theta}$ is established via the mapping:
\begin{align*}
\mathcal{P}: \ell^2(\mathbb{Z}, \mathbb{C}) &\to \ell^2(\mathbb{Z}, \mathbb{C}^2) \\
[\mathcal{P}u](n) &:= \begin{pmatrix} u(2n) \\ u(2n+1) \end{pmatrix}
\end{align*}
By direct calculations, the dual operator of $\mathcal{S}_{\theta}$ is
\begin{align}\label{op:dual_S}
    [\widehat{\mathcal{S}}_{\theta}\boldsymbol{u}](n)=\widehat{C}\boldsymbol{u}(n+1)+\widehat{C}\boldsymbol{u}(n-1)+\widehat{V}(2n\alpha+\theta)\boldsymbol{u}(n),
    \end{align}
    \normalsize
where
    $$\widehat{C}= \begin{pmatrix}
            \lambda&0\\0&0
        \end{pmatrix},\ \widehat{V}(\theta)=\begin{pmatrix}
            0&1+e^{-2\pi i\theta}\\1+e^{2\pi i\theta}&0
        \end{pmatrix}.$$

  The following lemma demonstrates that the spectral structure of $\sigma(\mathcal{H}_\theta)$ is equivalent to that of $\sigma(\widehat{\mathcal{S}}_{\theta})$:

\begin{lemma}[Spectral Equivalence]
\label{lem:spectral_equivalence}
There exists a closed set $\Sigma \subseteq \mathbb{R}$ such that for every $\theta \in \mathbb{T}$,
\[
\sigma(\mathcal{H}_\theta) = \sigma(\mathcal{S}_{\theta}) = \sigma(\widehat{\mathcal{S}}_{\theta}) = \Sigma.
\]
\end{lemma}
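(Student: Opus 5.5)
The argument has three links: the unitary map $\mathcal{P}$ gives $\sigma(\mathcal{H}_\theta)=\sigma(\mathcal{S}_\theta)$ for each $\theta$; minimality shows that $\sigma(\mathcal{S}_\theta)$ and $\sigma(\widehat{\mathcal{S}}_\theta)$ do not depend on $\theta$; and the direct-integral identity \eqref{5-15-eq7} identifies the two constant sets.

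First, $\mathcal{P}$ is unitary, since it merely regroups coordinates, and $\mathcal{P}\mathcal{H}_\theta\mathcal{P}^{-1}=\mathcal{S}_\theta$. Hence $\sigma(\mathcal{H}_\theta)=\sigma(\mathcal{S}_\theta)$ for every $\theta$. The operator $\mathcal{H}_\theta$ is dynamically defined over the uniquely ergodic, minimal system $(\mathbb{T}\times\mathbb{Z}_2,T_\alpha)$ with continuous potential $v$. So $\sigma(\mathcal{H}_\theta)$, which is the spectrum at the point $(\theta,0)$, is independent of the point, as recalled in the subsection on dynamically defined operators. Call this common set $\Sigma$; it is closed, being a spectrum of a bounded operator.

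Second, the dual operator $\widehat{\mathcal{S}}_\theta$ has $\theta$-independent spectrum by a standard covariance argument. Write $\theta'=\theta+2m\alpha$. The shift $(\tau\boldsymbol{u})(n)=\boldsymbol{u}(n+m)$ gives $\tau\widehat{\mathcal{S}}_\theta\tau^{-1}=\widehat{\mathcal{S}}_{\theta+2m\alpha}$, because the coefficients $\widehat{C}$ are constant and $\widehat{V}$ is evaluated at $2n\alpha+\theta$. Since $2\alpha$ is irrational, $\{\theta+2m\alpha\}$ is dense in $\mathbb{T}$. Also $\theta\mapsto\widehat{\mathcal{S}}_\theta$ is norm-continuous, because $\widehat{V}$ is a trigonometric polynomial, hence uniformly continuous. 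Spectra of self-adjoint operators move continuously in Hausdorff distance under norm perturbation, and they are constant along dense orbits, so they are constant: $\sigma(\widehat{\mathcal{S}}_\theta)=\widehat\Sigma$ for all $\theta$.

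Third, I identify $\Sigma$ with $\widehat\Sigma$ using \eqref{5-15-eq7}. The step needing care is computing the spectrum of a direct integral $\int^\oplus_{\mathbb{T}} A_\theta\,d\theta$ from the fibers. For a norm-continuous family of self-adjoint operators, $\sigma\big(\int^\oplus A_\theta\,d\theta\big)$ equals the closure of $\bigcup_\theta\sigma(A_\theta)$; this is the essential-range statement, and continuity removes any measure-zero subtleties. I would verify this directly. If $E\notin\bigcup_\theta\sigma(A_\theta)$, then by compactness of $\mathbb{T}$ and continuity the resolvents $(A_\theta-E)^{-1}$ are uniformly bounded, and their direct integral inverts the total operator. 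Conversely, if $E\in\sigma(A_{\theta_0})$, I build an approximate eigenvector for $A_{\theta_0}$ and take the fiberwise vector $\mathbf{1}_{U}(\theta)\,\phi$ for a small neighbourhood $U$ of $\theta_0$, normalized. Norm continuity makes this a Weyl sequence for the direct integral.

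The family $\mathcal{S}_\theta$ is norm-continuous for the same reason as $\widehat{\mathcal{S}}_\theta$, since $V$ is a trigonometric polynomial. Applying the direct-integral fact to both families gives
\[
\Sigma=\sigma\Big(\int^\oplus\mathcal{S}_\theta\,d\theta\Big)=\sigma\Big(\int^\oplus\widehat{\mathcal{S}}_\theta\,d\theta\Big)=\widehat\Sigma .
\]
This completes the chain $\sigma(\mathcal{H}_\theta)=\sigma(\mathcal{S}_\theta)=\sigma(\widehat{\mathcal{S}}_\theta)=\Sigma$ for every $\theta\in\mathbb{T}$.

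I expect the main obstacle to be justifying that the unitary equivalence behind \eqref{5-15-eq7} applies here. The hopping matrix $C$ is singular, so the Aubry duality statement from \cite{wang2022} for singular strip operators must be checked against the present $C$, $V$, $\widehat{C}$ and $\widehat{V}$. The non-resonance condition needed there is just that $2\alpha$ is irrational.
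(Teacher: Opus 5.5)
Your proposal is correct and follows essentially the same route as the paper. The map $\mathcal{P}$ gives $\sigma(\mathcal{H}_\theta)=\sigma(\mathcal{S}_\theta)$, minimality makes the fiber spectra constant and equal to the spectrum of the direct integral, and \eqref{5-15-eq7} identifies the two sets; you supply the details the paper compresses into one line (the shift covariance $\tau\widehat{\mathcal{S}}_\theta\tau^{-1}=\widehat{\mathcal{S}}_{\theta+2m\alpha}$ with norm continuity, and the proof that $\sigma(\int^\oplus A_\theta\,d\theta)=\overline{\bigcup_\theta\sigma(A_\theta)}$ for norm-continuous families), while, like the paper, taking the Aubry-duality identity \eqref{5-15-eq7} as given.
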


\begin{proof}
The unitary equivalence $\mathcal{P} \mathcal{H}_{\theta} \mathcal{P}^{-1} = \mathcal{S}_{\theta}$ implies:
\begin{equation}\label{eq:unitary_spectral_equivalence}
\sigma(\mathcal{H}_\theta) = \sigma(\mathcal{S}_{\theta}) \quad \forall \theta \in \mathbb{T}.
\end{equation}
By the direct integral spectral equality in \eqref{5-15-eq7}:
\begin{equation}\label{eq:direct_integral_equality}
\sigma\left( \int_{\mathbb{T}}^{\oplus} \mathcal{S}_{\theta}  d\theta \right) = \sigma\left( \int_{\mathbb{T}}^{\oplus} \widehat{\mathcal{S}}_{\theta}  d\theta \right).
\end{equation}
Since the base dynamics is minimal, the spectrum of each fiber operator is constant and coincides with the spectrum of the direct integral. Thus, for all $\theta \in \mathbb{T}$:
\begin{align*}
\sigma(\mathcal{S}_{\theta}) &= \sigma\left( \int_{\mathbb{T}}^{\oplus} \mathcal{S}_{\theta}  d\theta \right) \\
\sigma(\widehat{\mathcal{S}}_{\theta}) &= \sigma\left( \int_{\mathbb{T}}^{\oplus} \widehat{\mathcal{S}}_{\theta}  d\theta \right).
\end{align*}
Combining \eqref{eq:unitary_spectral_equivalence}, \eqref{eq:direct_integral_equality}, and the minimality condition:
\[
\sigma(\mathcal{H}_\theta) = \sigma(\mathcal{S}_{\theta}) = \sigma(\widehat{\mathcal{S}}_{\theta}) =: \Sigma \quad \forall \theta \in \mathbb{T}.
\]
\end{proof}

\section{Integrated density of states and fibered rotation number}

In this section, we establish the equivalence between the integrated density of states and the fibered rotation number for the singular strip Jacobi  operator $\widehat{\mathcal{S}}_{\theta}$ in \eqref{op:dual_S}, thereby generalizing Theorem \ref{thm:rotation_ids}. This fundamental relation has significant applications in spectral theory, particularly for Cantor spectrum analysis (discussed in Section \ref{Cantor spectrum}).

We begin by introducing essential concepts and notation. Denote the restriction of $\widehat{\mathcal{S}}_{\theta}$ to $[1,N]$ with Dirichlet boundary conditions by $\widehat{\mathcal{S}}_{\theta}|_{[1,N]}$, i.e.,
\small
\begin{align*}
    \widehat{\mathcal{S}}_{\theta}|_{[1,N]}=\begin{pmatrix}
			\widehat V_\theta(1)&\widehat{C}&&&\\\widehat{C}&\widehat V_{\theta}(2)&\widehat{C}&&\\&\ddots&\ddots&\ddots&\\&&\widehat{C}&\widehat V_\theta(N-1)&\widehat{C}\\&&&\widehat{C}&\widehat V_\theta(N)
		\end{pmatrix}_{2N\times 2N},
\end{align*}
where $\widehat{V}_\theta(n)=\widehat{V}(2n\alpha+\theta)$.
\normalsize

For $\theta \in \mathbb{T}$ and $N \geq 1$, define the empirical spectral measure $d\hat{n}_{\theta,N}$ as the uniform distribution over the ordered eigenvalues $E^{(1)}_{\theta,N} \leq \cdots \leq E^{(2N)}_{\theta,N}$ of $\widehat{\mathcal{S}}_\theta|_{[1,N]}$:
\begin{align}\label{eq:empirical_measure}
\int g  d\hat{n}_{\theta,N} := \frac{1}{2N} \sum_{j=1}^{2N} g(E_{\theta,N}^{(j)}) = \frac{1}{2N} \mathrm{Tr}\left(g(\widehat{\mathcal{S}}_{\theta}|_{[1,N]})\right)
\end{align}
for bounded measurable $g$.

Define the basis vectors $\boldsymbol{\delta}_0 = \delta_0(1,0)^\top$ and $\boldsymbol{\gamma}_0 = \delta_0(0,1)^\top$, with corresponding spectral measures $\hat{\mu}_{\theta,\boldsymbol{\delta}_0}$ and $\hat{\mu}_{\theta,\boldsymbol{\gamma}_0}$ for $\widehat{\mathcal{S}}_\theta$.

\begin{proposition}[\cite{wang2022, AS}]\label{prop:weak_convergence}
The following spectral properties hold:
\begin{enumerate}
    \item For almost every $\theta \in \mathbb{T}$, the empirical measure $d\hat{n}_{\theta,N}$ converges weakly as $N \to \infty$ to the (averaged) spectral measure:
    \begin{equation}\label{eq:limit_measure}
    d\hat{n} := \int_{\mathbb{T}} \frac{1}{2} \left( d\hat{\mu}_{\theta,\boldsymbol{\delta}_0} + d\hat{\mu}_{\theta,\boldsymbol{\gamma}_0} \right) d\theta
    \end{equation}
    
    \item The distribution function
    \begin{align}\label{eq:ids_cumulative}
    \hat{n}_{\theta,N}(E) := \frac{1}{2N} \#\left\{ j : E_{\theta,N}^{(j)} \leq E \right\}
    \end{align}
    converges for a.e. $\theta\in\mathbb{T}$ to a continuous, non-decreasing limit $\hat{n}(E)$ that is $\theta$-independent. This limit defines the integrated density of states (IDS) for $\widehat{\mathcal{S}}_{\theta}$.
    
    \item The IDS $\hat{n}(E)$ is the distribution function of $d\hat{n}$:
    \[
    \hat{n}(E) = \int_{-\infty}^E d\hat{n}(\lambda)
    \]
    and  $\mathrm{supp}(d\hat{n}) = \Sigma.$
\end{enumerate}
\end{proposition}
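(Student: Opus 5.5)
The plan is to treat $\widehat{\mathcal{S}}_\theta$ as an ergodic family over the uniquely ergodic rotation $\theta\mapsto\theta+2\alpha$ and follow the classical Avron--Simon argument.

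\textbf{Step 1: trace identity.} For polynomials $g$, write
\[
\frac{1}{2N}\mathrm{Tr}\bigl(g(\widehat{\mathcal{S}}_\theta|_{[1,N]})\bigr)=\frac{1}{2N}\sum_{n=1}^N\sum_{e\in\{\boldsymbol{\delta},\boldsymbol{\gamma}\}}\langle e_n, g(\widehat{\mathcal{S}}_\theta|_{[1,N]})e_n\rangle .
\]
Since $\widehat{\mathcal{S}}_\theta$ has finite range (nearest neighbour), $\langle e_n, g(\widehat{\mathcal{S}}_\theta|_{[1,N]})e_n\rangle=\langle e_n, g(\widehat{\mathcal{S}}_\theta)e_n\rangle$ whenever $n$ is at distance greater than $\deg g$ from the boundary. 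The boundary terms contribute $O(\deg g/N)$, since each summand is bounded by $\|g\|_{\infty}$ on a ball containing all spectra. The covariance $T^{-n}\widehat{\mathcal{S}}_\theta T^n=\widehat{\mathcal{S}}_{\theta+2n\alpha}$ turns each bulk term into $f_g(\theta+2n\alpha)$, where
\[
f_g(\theta)=\tfrac12\bigl(\langle\boldsymbol{\delta}_0,g(\widehat{\mathcal{S}}_\theta)\boldsymbol{\delta}_0\rangle+\langle\boldsymbol{\gamma}_0,g(\widehat{\mathcal{S}}_\theta)\boldsymbol{\gamma}_0\rangle\bigr).
\]
This function is continuous (indeed a trigonometric polynomial) in $\theta$.

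\textbf{Step 2: ergodic averages.} Since $2\alpha$ is irrational, unique ergodicity gives $\frac1N\sum_n f_g(\theta+2n\alpha)\to\int_\T f_g\,d\theta=\int g\,d\hat n$, uniformly in $\theta$ (hence for a.e.\ $\theta$, as stated). The empirical measures are supported in $[-\|\widehat{\mathcal{S}}\|,\|\widehat{\mathcal{S}}\|]$, so Weierstrass density extends this to all continuous $g$, which proves (1).

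\textbf{Step 3: the distribution functions.} For (2)--(3), pass from weak convergence to convergence of distribution functions at continuity points of $\hat n$. The main obstacle is continuity of $\hat n(E)$, because in this singular setting $\widehat C$ has rank one. I would prove it by the standard argument: an atom at $E$ would force $\int_\T\langle e_0,P_{\{E\}}(\widehat{\mathcal{S}}_\theta)e_0\rangle\,d\theta>0$. By covariance and ergodicity, the dimension of the $E$-eigenspace of $\widehat{\mathcal{S}}_\theta$ would then be infinite, since its trace over $[-L,L]$ grows linearly in $L$. This contradicts the fact that the solution space of the eigenvalue equation is finite dimensional. For the singular hopping, I would check this finiteness directly: the second component of each row equation determines $u_2(n)$ algebraically from $u_1(n)$, which reduces the equation to a scalar second-order recursion with transfer matrices, so the solution space has dimension at most $2$. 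A dimension bounded by $2$ against linear growth gives the contradiction.

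\textbf{Step 4: conclusion.} With continuity in hand, convergence holds at every $E$, and the limit equals $\int_{-\infty}^E d\hat n$, which is independent of $\theta$. Finally, $\mathrm{supp}(d\hat n)=\Sigma$ follows from Lemma \ref{lem:spectral_equivalence}. If $E\in\Sigma$, then for every neighbourhood $U$ the projection $P_U(\widehat{\mathcal{S}}_\theta)\neq0$ for all $\theta$, so the averaged diagonal spectral measure over the generating vectors $\boldsymbol{\delta}_n,\boldsymbol{\gamma}_n$, obtained via covariance, gives $\hat n(U)>0$. Conversely, $\hat n$ vanishes off $\Sigma$ trivially.
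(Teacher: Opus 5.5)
Your argument is correct, and it is the standard Avron--Simon route: a finite-range trace comparison, unique ergodicity of $\theta\mapsto\theta+2\alpha$, Weierstrass approximation, and exclusion of atoms via finite-dimensionality of the solution space. The paper gives no proof of its own and simply imports this argument from \cite{AS,wang2022}. The one detail to add is $E=0$, where your reduction divides by $E$ and so does not apply. There the second row gives $(1+e^{2\pi i(2n\alpha+\theta)})a(n)=0$, so for $\theta\in\T_0$ one gets $a\equiv0$, and then $b\equiv0$ from the first row. Hence $P_{\{0\}}(\widehat{\mathcal{S}}_\theta)=0$ for a.e.\ $\theta$, and $\hat n$ has no atom at $0$.
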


It is a folklore fact that the integrated density of
states (IDS) is actually equal to the fibered rotation number. The fibered rotation number is also related to the dynamical properties of the solutions of the associated
eigenvalue equation.

Let $\boldsymbol{u}(n) = (a(n), b(n))^\mathsf{T}$, then $\widehat{\mathcal{S}}_{\theta}\boldsymbol{u}=z\boldsymbol{u},\ z\neq0$ can be rewritten as 
\begin{align}\label{eq6}
\begin{pmatrix}a(n+1)\\ a(n)\end{pmatrix}&=\begin{pmatrix}\frac{z^2-2-2\cos2\pi(2n\alpha+\theta)}{z\lambda}&-1\\1&0\end{pmatrix}\begin{pmatrix}a(n)\\ a(n-1)\end{pmatrix}\\&:=\hat A^z(T^n\theta)\begin{pmatrix}a(n)\\a(n-1)\end{pmatrix}.\nonumber
\end{align}
Meanwhile, we have
\begin{align}\label{eq5}
b(n)=\frac{1+e^{2\pi i(2n\alpha+\theta)}}{z}a(n)\quad\forall n\in\Z.
\end{align} 
Thus, spectral analysis of $\widehat{\mathcal{S}}_{\theta}$ reduces to studying the cocycle $({2\alpha}, \hat{A}^E)$ with $\hat A^E(\theta)$ takes the form 
\begin{align*}
    \hat A^E(\theta)=\begin{pmatrix}\frac{E^2-2-2\cos2\pi\theta}{E\lambda}&-1\\1&0\end{pmatrix}.
\end{align*}
 If  $E\neq0$, we have  $\hat A^E \in C^\omega(\T, \mathrm{SL}(2,\R))$, thus the fibered rotation number of the cocycle $(2\alpha, \hat A^E)$ is well-defined. Through the Interlacing Lemma (Lemma \ref{11-23-lemma3.15}), we establish the following fundamental relationships between IDS and rotation numbers:

\begin{theorem}\label{11-29-theorem5.21}Suppose $(\T,\nu,T)$ is uniquely ergodic. Then, the following hold:
\begin{enumerate}
     \item  For $E>0$, we have
$\hat{{n}}(E)=1-\rho(2\alpha,\hat A^E).$

\item For $E<0$, we have
$\hat{{n}}(E)=\frac{1}{2}-\rho(2\alpha,\hat A^E).$
    \end{enumerate}
    Moreover, $\rho(2\alpha,\hat A^{0+})=\frac{1}{2}+\rho(2\alpha,\hat A^{0-})$, and
\begin{align}
    \rho(2\alpha,\hat A^E)=\left\{\begin{array}{ll}
       2\rho(T_\alpha,S^v_E)  &\text{if } E>0, \\
       2\rho(T_\alpha,S^v_E)-\frac{1}{2}  &\text{if }E<0. 
    \end{array}\right.
\end{align}
\end{theorem}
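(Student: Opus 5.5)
The plan is the classical Sturm oscillation argument, with the $2\times2$ block structure reduced to the scalar recurrence \eqref{eq6}. Fix $E\neq 0$ and $\theta$. By \eqref{eq5}, an eigenvector $\boldsymbol{u}=(a,b)$ of $\widehat{\mathcal{S}}_\theta|_{[1,N]}$ with eigenvalue $z\neq0$ is determined by $a$. The sequence $a$ solves the recurrence \eqref{eq6} with Dirichlet conditions $a(0)=a(N+1)=0$. Hence the nonzero eigenvalues of the truncation are exactly the zeros of the polynomial-type function
$$z\mapsto z^{N}\bigl[\hat A^z_N(\theta)\bigr]_{11},$$
after clearing the $1/z$ denominators. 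This is a polynomial of degree $2N$ in $z$.

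The counting proceeds in three steps.

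\textbf{Step 1: eigenvalue count.} Since $\widehat{C}$ has rank one, $0$ is an eigenvalue only in degenerate cases. I will check that the determinant of the truncation equals, up to sign, $\lambda^N$ times a nonvanishing factor for generic $\theta$, so $0$ is not an eigenvalue for a.e. $\theta$. Thus all $2N$ eigenvalues are nonzero and are the roots of the above polynomial.

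\textbf{Step 2: count on each half-line (the Interlacing Lemma).} For $E>0$, I will relate $\#\{j:0<E^{(j)}_{\theta,N}\le E\}$ to the number of sign changes of $a(n)$, $n\le N$, for the solution with $a(0)=0$, $a(1)=1$. This is the same as the winding of the projective lift of $\hat A^E_n(\theta)$. The key monotonicity is that the entry $\frac{E^2-2-2\cos}{E\lambda}$ is strictly increasing in $E$ on $(0,\infty)$ and on $(-\infty,0)$, because its derivative is $\frac{1}{\lambda}\bigl(1+\frac{2+2\cos}{E^2}\bigr)>0$. Hence the projective angle of $\hat A^E_n e_1$ rotates monotonically in $E$ on each half-line, exactly as in the Schrödinger case. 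It therefore picks up one half-turn (a zero of $[\hat A^E_N]_{11}$) per eigenvalue crossed, up to an $O(1)$ boundary error.

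\textbf{Step 3: fix the constants by boundary behaviour.} As $E\to+\infty$, the count is $2N$, so $\hat n(E)\to1$ while $\rho\to0$. As $E\to0^+$, the entry tends to $-\infty$ uniformly, so the angle makes half a turn per step and $\rho(2\alpha,\hat A^{0+})=\tfrac12$. Dividing by $2N$ then gives $\hat n(E)=\hat n(0)+\tfrac12-\rho$ on $(0,\infty)$, consistent with $\hat n(0)=\tfrac12$. Symmetrically, as $E\to0^-$ the entry tends to $+\infty$, giving $\rho(2\alpha,\hat A^{0-})=0$, and as $E\to-\infty$ it tends to $-\infty$, giving $\rho=\tfrac12$ with $\hat n=0$. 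This yields $\hat n(E)=\tfrac12-\rho$ on $(-\infty,0)$ and the relation $\rho(0+)=\tfrac12+\rho(0-)$.

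The identity $\hat n(0)=\tfrac12$ comes from the counts on each side of $0$. These should be $N$ each, by the spectral symmetry $\boldsymbol{u}=(a,b)\mapsto(a,-b)$ combined with the fact that the operator has block form. I would verify this directly from the roots of the polynomial: its coefficients in $z^2$ show that it is odd or even in $z$ up to a factor. Passing to the limit uses unique ergodicity of the base dynamics together with Proposition \ref{prop:weak_convergence}.

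\textbf{Comparison with the original cocycle.} Since $D^v_E=S_E^v(\cdot,1)S_E^v(\cdot,0)$, we have $\rho(2\alpha,D^v_E)=2\rho(T_\alpha,S^v_E)$. The equality $\sigma(\mathcal H)=\Sigma$ and the same IDS give $n(E)=\hat n(E)$ via Lemma \ref{lem:spectral_equivalence} and the direct integral equality. Then Theorem \ref{thm:rotation_ids}, applied to the period-two Schrödinger operator, gives $n=1-2\rho(T_\alpha,S_E^v)$. Substituting into (1) and (2) yields both cases of the displayed formula.

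\textbf{Main obstacle.} I expect the hardest step to be Step 2 in the presence of repeated eigenvalues. The multiplicity of a zero of the polynomial must match the multiplicity of the eigenvalue. I would handle this by proving it for generic $\theta$, where the roots are simple, and then using the continuity in $\theta$ of both sides together with the $O(1/N)$ error.
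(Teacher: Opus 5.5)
Your proposal is correct. The overall structure (count eigenvalues on each half-line, anchor at $\pm\infty$, identify $\hat n$ with $n$) matches the paper, but you reach the key counting estimate, Lemma~\ref{11-29-lemma5.20}, by a different mechanism.

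\textbf{The two routes.} The paper works on the operator side. It proves the Interlacing Lemma~\ref{11-23-lemma3.15} for $\widehat{\mathcal S}_\theta|_{[1,N]}$ versus $\widehat{\mathcal S}_\theta|_{[1,N-1]}$ from the representation $h(z)=\frac1z\sum_j|\langle\boldsymbol v_j,\boldsymbol\delta_N\rangle|^2/(z-E^{(j)}_N)$, using the external input $\langle\boldsymbol v_j,\boldsymbol\delta_N\rangle\neq0$ (Proposition~\ref{5-8-proposition3.1}). The factor $1/z$ removes the interlacing point in the interval containing $0$. Interlacing then forces $Z_{n,\theta}=(\Delta_n,\Delta_{n-1})$ to turn by $-\pi/2$ between consecutive zeros.

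You instead use Pr\"ufer-angle monotonicity, from $\partial_E\frac{E^2-2-2\cos2\pi\theta}{E\lambda}=\frac1\lambda\bigl(1+\frac{2+2\cos2\pi\theta}{E^2}\bigr)>0$ on each half-line. This is more self-contained, and it also removes what you call the main obstacle. Let $f_k(E)$ be the $(1,1)$-entry of $\hat A^E(\theta+2k\alpha)$ and $a$ the solution of \eqref{eq6} with $a(0)=0$, $a(1)=1$. Then $W_N:=a(N+1)\partial_Ea(N)-a(N)\partial_Ea(N+1)=-\sum_{k=1}^N f_k'(E)a(k)^2<0$. So the angle of $(a(N+1),a(N))$ is strictly decreasing in $E$, and every nonzero Dirichlet eigenvalue is a simple, transversal crossing. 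No genericity-plus-continuity argument in $\theta$ is needed. Simplicity also follows directly: by \eqref{eq5}--\eqref{eq6}, an eigenvector with $z\neq0$ is determined by $a(1)$, and the truncation is Hermitian. What the paper's route gives in addition is the explicit interlacing of the truncated spectra and the location of the gap around $0$.

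\textbf{Details to fix (none fatal).}
\begin{enumerate}
\item The determinant of the truncation is $(-1)^N\prod_{n=1}^N\bigl(2+2\cos2\pi(2n\alpha+\theta)\bigr)$. There is no factor $\lambda^N$.
\item The map $(a,b)\mapsto(a,-b)$ does not anticommute with $\widehat{\mathcal S}_\theta$, because of the $\lambda$-hopping in the $a$-component. The correct symmetry is $(a(n),b(n))\mapsto((-1)^na(n),(-1)^{n+1}b(n))$. Your parity check on the characteristic polynomial is fine.
\item The entry does not tend to $-\infty$ uniformly as $E\to0^+$: at $\theta=\frac12$ it equals $E/\lambda$. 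So $\rho(2\alpha,\hat A^{0\pm})$ cannot be read off this way. The detour is unnecessary anyway. The anchors at $\pm\infty$ give (1) and (2). Then $\rho(2\alpha,\hat A^{0+})=\frac12+\rho(2\alpha,\hat A^{0-})$ follows from continuity of $\hat n$ at $0$, exactly as in the paper, without needing $\hat n(0)=\frac12$.
\item $\hat n=n$ does not follow from equality of spectra. What works is that $\mathbf A_{2\alpha}$ fixes the $\theta$-independent vectors $\boldsymbol\delta_0,\boldsymbol\gamma_0$, so the averaged spectral measures in \eqref{eq:limit_measure} coincide. The paper also uses this step without proof.
\end{enumerate}
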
	
We will give proof of this Theorem at the end of this section.

Let $\T_0:=\{\theta\in\T:\theta\neq\frac{1}{2}+2\Z\alpha+\Z\}.$ For every $N\ge2$, $\{E_{N}^{(j)}\}_{j=1}^{2N}$ denote the $2N$ eigenvalues of $\widehat{\mathcal{S}}_{\theta}|_{[1,N]}$ with $\theta\in\T_0$, sorted in ascending order by $j$. Let $\boldsymbol{v}_1,\dots,\boldsymbol{v}_{2N}$ be the orthonormal eigenvectors of $\widehat{\mathcal{S}}_{\theta}|_{[1,N]}$, chosen so that $\widehat{\mathcal{S}}_{\theta}|_{[1,N]}\boldsymbol{v}_j=E^{(j)}_N\boldsymbol{v}_j$. Then, we have the following lemma:
\begin{lemma}[Interlacing Lemma]\label{11-23-lemma3.15}
   Suppose $\theta\in\T_0$. We have $0\notin\sigma(\widehat{\mathcal{S}}_{\theta}|_{[1,N]})$, and $\langle \boldsymbol{v}_{j},\boldsymbol{\delta}_N\rangle \neq0\ \forall j=1,\dots,2N$. Thus, there exists $m\in[1,2N-1]$ such that $$E_N^{(m)}<0<E_N^{(m+1)}.$$
Moreover, all the $2N-2$ eigenvalues $\{E_{N-1}^{(j)}\}_{j=1}^{2N-2}$ of $\widehat{\mathcal{S}}_{\theta}|_{[1,N-1]}$ satisfy
\begin{align*}
    E_N^{(j)}<E_{N-1}^{(j)}<E_N^{({j+1})}
\end{align*}
for $1\le j\le m-1$, and
\begin{align*}
    E_N^{(j)}<E_{N-1}^{(j-1)}<E_N^{({j+1})}
\end{align*}
for $m+1\le j\le 2N-1$. 
\end{lemma}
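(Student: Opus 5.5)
The plan is to exploit the block structure of $\widehat{\mathcal{S}}_{\theta}|_{[1,N]}$. Write an eigenvector as $\boldsymbol{u}(n)=(a(n),b(n))^{\mathsf T}$ and set $\phi_n=2n\alpha+\theta$. Since $\widehat C$ only couples $a$-components and $\widehat V$ has zero diagonal, the eigenvalue equation $\widehat{\mathcal{S}}_{\theta}|_{[1,N]}\boldsymbol{u}=E\boldsymbol{u}$ reads
\begin{align*}
\lambda\bigl(a(n+1)+a(n-1)\bigr)+(1+e^{-2\pi i\phi_n})b(n)&=Ea(n),\\
(1+e^{2\pi i\phi_n})a(n)&=Eb(n),
\end{align*}
for $1\le n\le N$, with $a(0)=a(N+1)=0$. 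The condition $\theta\in\T_0$ is exactly what guarantees $1+e^{2\pi i\phi_n}\neq0$ for every $n$.

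\emph{Step 1: $0$ is not an eigenvalue.} If $E=0$, the second equation forces $a\equiv0$, and then the first forces $b\equiv0$.

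\emph{Step 2: every eigenvalue is simple and $\langle \boldsymbol{v}_j,\boldsymbol{\delta}_N\rangle\ne0$.} For $E\neq0$, eliminate $b$ via \eqref{eq5}. Then $a$ solves a second-order recursion (the one given by $\hat A^E$), with $\lambda\neq0$. If $a(N)=\langle\boldsymbol{u},\boldsymbol{\delta}_N\rangle=0$, then together with $a(N+1)=0$ the recursion run backwards gives $a\equiv0$, hence $b\equiv0$. So every eigenvector has nonzero $\boldsymbol{\delta}_N$-component. Similarly, the solution with $a(0)=0$ is unique up to scaling, so each eigenspace is one-dimensional. Hence there are $2N$ distinct nonzero eigenvalues. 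At least one is negative and one positive, because the trace is $0$ and the matrix is nonzero; this gives the index $m$ with $E_N^{(m)}<0<E_N^{(m+1)}$.

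\emph{Step 3: interlacing via the Green's function.} Consider
\[
f_N(E)=\langle\boldsymbol{\delta}_N,(\widehat{\mathcal{S}}_{\theta}|_{[1,N]}-E)^{-1}\boldsymbol{\delta}_N\rangle=\sum_{j=1}^{2N}\frac{|\langle\boldsymbol{v}_j,\boldsymbol{\delta}_N\rangle|^2}{E_N^{(j)}-E}.
\]
By Step 2 every residue is nonzero, so $f_N$ has a pole at each $E_N^{(j)}$. Moreover $f_N'>0$, so on each interval $(E_N^{(j)},E_N^{(j+1)})$ the function $f_N$ increases from $-\infty$ to $+\infty$ and has exactly one zero there. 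This gives $2N-1$ zeros, and no others, since $f_N\to0$ from fixed sides outside $[E_N^{(1)},E_N^{(2N)}]$.

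On the other hand, Cramer's rule gives $f_N(E)=\det(M-E)/\det(\widehat{\mathcal{S}}_{\theta}|_{[1,N]}-E)$, where $M$ is the matrix with the $\boldsymbol{\delta}_N$ row and column deleted. The $b(N)$-coordinate is coupled only to $a(N)$, and its diagonal entry is $0$, so $M=\widehat{\mathcal{S}}_{\theta}|_{[1,N-1]}\oplus(0)$. Hence the numerator is $(-E)\det(\widehat{\mathcal{S}}_{\theta}|_{[1,N-1]}-E)$, a polynomial of degree $2N-1$. Since no residue vanishes, no cancellation with the denominator occurs, and its roots are exactly the $2N-1$ zeros found above.

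\emph{Step 4: indexing.} The root $0$ must lie in the gap $(E_N^{(m)},E_N^{(m+1)})$. The $2N-2$ eigenvalues of $\widehat{\mathcal{S}}_{\theta}|_{[1,N-1]}$, which are nonzero by Step 1 applied to $N-1$, then fill the remaining gaps in order. The first $m-1$ gaps give $E_N^{(j)}<E_{N-1}^{(j)}<E_N^{(j+1)}$ for $j\le m-1$; the gaps after the one containing $0$ give the shifted statement for $m+1\le j\le 2N-1$.

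The main obstacle I expect is Step 3. I need to identify the minor correctly as $\widehat{\mathcal{S}}_{\theta}|_{[1,N-1]}\oplus(0)$ and to rule out common roots of numerator and denominator. Both rely on Step 2, that is, on nonvanishing residues, which in turn uses $\theta\in\T_0$ and $\lambda\ne0$.
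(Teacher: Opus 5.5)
Your proof is correct and follows essentially the same route as the paper: Cramer's rule ties $\det(z-\widehat{\mathcal{S}}_{\theta}|_{[1,N-1]})$ to the Green's function of $\widehat{\mathcal{S}}_{\theta}|_{[1,N]}$ at $\boldsymbol{\delta}_N$, with the extra factor $z$ coming from the decoupled $b(N)$ entry, which you make explicit. Nonvanishing residues then place one zero in each gap, and the gap containing $0$ is the one skipped. The only differences are minor: you get the sign change at $0$ from the vanishing trace rather than from regularity of $h$ at $z=0$, and you prove $a(N)\neq 0$ and simplicity of the eigenvalues directly, where the paper cites Proposition \ref{5-8-proposition3.1} and leaves simplicity implicit.
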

	
\begin{proof}
Since$$\det(\widehat{\mathcal{S}}_\theta|_{[1,N]})=(-4)^N\prod_{j=1}^N|1+\cos2\pi(2n\alpha+\theta)|^2\neq 0,\ \forall\theta\in\T_0,$$
we have $0\notin\sigma(\widehat{\mathcal{S}}_\theta|_{[1,N]})$. The result $\langle \boldsymbol{v}_j, \boldsymbol{\delta}_N \rangle \neq 0$ for all $j = 1, \dots, 2N$ follows from:
\begin{proposition}\cite[Lemma 5.7 (i)]{wang2022}\label{5-8-proposition3.1}
    Let $\boldsymbol{u}$ be an eigenvector of $\widehat{\mathcal{S}}_\theta|_{[1,N]}$ with $\theta\in\T_0$ and $N\ge2$ corresponding to the eigenvalue $z$. If $z\neq0$, then $a(1)a(N)\neq0$.
\end{proposition}

Define a meromorphic function $h$ by
\begin{align*}
h(z)=\frac{\det(z-\widehat{\mathcal{S}}_\theta|_{[1,N-1]})}{\det(z-\widehat{\mathcal{S}}_\theta|_{[1,N]})}.
\end{align*}
By Cramer's rule, 
\begin{align*}h(z)=&\frac{1}{z}\langle \boldsymbol{\delta}_N,(z-\widehat{\mathcal{S}}_\theta|_{[1,N]})^{-1}\boldsymbol{\delta}_N\rangle 
\end{align*}
 for all $z\notin\sigma(\widehat{\mathcal{S}}_\theta|_{[1,N]})$. If we expand $\boldsymbol{\delta}_N$  in the basis $\{\boldsymbol{v}_1,\dots,\boldsymbol{v}_{2N}\}$, we see that
\begin{align}
h(z)=&\frac{1}{z}\langle \sum_{j=1}^{2N}\langle \boldsymbol{v}_j,\boldsymbol{\delta}_N\rangle \boldsymbol{v}_j,\sum_{j=1}^{2N}\frac{1}{z-E^{(j)}_N}\langle \boldsymbol{v}_j,\boldsymbol{\delta}_N\rangle \boldsymbol{v}_j\rangle \nonumber\\=&\frac{1}{z}\sum^{2N}_{j=1}\frac{|\langle \boldsymbol{v}_j,\boldsymbol{\delta}_N\rangle |^2}{z-E_N^{(j)}}\label{11-23-eq11}
\end{align}
for all $z\notin\sigma(\widehat{\mathcal{S}}_\theta|_{[1,N]})\}$. 

Since $0\notin\sigma(\widehat{\mathcal{S}}_\theta|_{[1,N]})\}$, and $\langle \boldsymbol{v}_{j},\boldsymbol{\delta}_N\rangle \neq0\ \forall j=1,\dots,2N$, then $\lim_{z\to 0}h(z)$ exists, and
$$\lim_{z\to 0}\sum^{2N}_{j=1}\frac{|\langle \boldsymbol{v}_j,\boldsymbol{\delta}_N\rangle |^2}{z-E_N^{(j)}}=\sum^{2N}_{j=1}\frac{|\langle \boldsymbol{v}_j,\boldsymbol{\delta}_N\rangle |^2}{E_N^{(j)}}=0.$$
Therefore, there exists $m\in[1,2N-1]$ such that $E_N^{(m)}<0<E_N^{(m+1)}.$

For each $1\le j\le m-1$, we have
\begin{align*}
\lim_{z\downarrow E_{N}^{(j)}}h(z)=-\infty\text{ and }\lim_{z\uparrow E_{N}^{(j+1)}}h(z)=+\infty,
\end{align*}
so $h(z)$ vanishes somewhere in the open interval $\left(E_N^{(j)},E_N^{(j+1)}\right)$. It follows that $\widehat{\mathcal{S}}_\theta|_{[1,N-1]}$ has an eigenvalue strictly between $E_N^{(j)}$ and $E_N^{(j+1)}$ for 
each $1\le j\le m-1$.

Similarly, for each $m+1\le j\le 2N-1$, we have
\begin{align*}
\lim_{z\downarrow E_{N}^{(j)}}h(z)=+\infty\text{ and }\lim_{z\uparrow E_{N}^{(j+1)}}h(z)=-\infty,
\end{align*}
so $h(z)$ vanishes somewhere in the open interval $\left(E_N^{(j)},E_N^{(j+1)}\right)$. It follows that $\widehat{\mathcal{S}}_\theta|_{[1,N-1]}$ has an eigenvalue strictly between $E_N^{(j)}$ and $E_N^{(j+1)}$ for 
each $m+1\le j\le 2N-1$. So we have found all the $2N-2$ roots of $h(z)$, and the proof is completed.
\end{proof}	
	
Recall the $n$-step transfer matrix defined in \eqref{n-step transfer} as $\hat{A}_n^z(\theta) = \prod_{j=0}^{n-1} \hat{A}^z(T_{2\alpha}^j\theta)$. By direct computation, we obtain the determinant representation for $z \neq 0$ and $n \geq 3$:
\begin{align}\label{eq:det_representation}
\hat{A}_n^z(\theta) = \frac{1}{z^n\lambda^n}
\begin{pmatrix} 
\det\left(z - \widehat{\mathcal{S}}_{\theta}\vert_{[1,n]}\right) & 
-z\lambda \det\left(z - \widehat{\mathcal{S}}_{\theta}\vert_{[2,n]}\right) \\[5pt]
z\lambda\det\left(z - \widehat{\mathcal{S}}_{\theta}\vert_{[1,n-1]}\right) & 
-z^2\lambda^2\det\left(z - \widehat{\mathcal{S}}_{\theta}\vert_{[2,n-1]}\right)
\end{pmatrix}.
\end{align}
Denote $$\Delta_{n,\theta}(E)=\frac{1}{E^n\lambda^n}\det(E-\widehat{\mathcal{S}}_{\theta}|_{[1,n]}),$$ and
\begin{align*}
    Z_{n,\theta}(E)=\begin{pmatrix}
        \Delta_n(E)\\\Delta_{n-1}(E)
    \end{pmatrix}, \qquad      Z_{n,\theta}(E)=\hat A^E(T^{n-1}\theta)Z_{n-1,\theta}(E).
\end{align*}
It is clear that 
\begin{align*}
    Z_{n,\theta}(E)=\hat A^E_n(\theta)\begin{pmatrix}
        1\\0
    \end{pmatrix}, \qquad  Z_{n-1,\theta+2\alpha}(E)=\hat A^E_{n-1}(\theta+2\alpha)\begin{pmatrix}
        0\\-1
    \end{pmatrix}.
\end{align*}
By \eqref{eq:det_representation}, we have
\begin{align*}
    (Z_{n,\theta}(E),-Z_{n-1,\theta+2\alpha}(E))&=\hat A_n^E(\theta)\\&=\frac{1}{E^n\lambda^n}\begin{pmatrix}\det(E-\widehat{\mathcal{S}}_{\theta}|_{[1,n]})
&-E\lambda\det(E-\widehat{\mathcal{S}}_{\theta}|_{[2,n]})\\ & \\E\lambda\det(E-\widehat{\mathcal{S}}_{\theta}|_{[1,n-1]})&-E^2\lambda^2\det(E-\widehat{\mathcal{S}}_{\theta}|_{[2,n-1]})\end{pmatrix}.
\end{align*}

Consider a projective cocycle $F_{\hat A^z}$ on $\T\times\mathbb{S}^1$:
\begin{align*}(\theta,\psi)\mapsto\left(\theta+2\alpha,\frac{\hat A^z(\theta)\psi}{\|\hat A^z(\theta)\psi\|}\right).
\end{align*}
It is possible to choose a continuous lift of $F_{\hat A^z}$ to $\R$. Suppose $\xi_{n,\theta}(E)$ is a lift of $Z_{n,\theta}(E)$, then we have
\begin{itemize}
    \item $\lim_{n\to\infty}\frac{1}{n}\xi_{n,\theta}(E)=\rho(2\alpha,\hat A^E).$
    \item If we denote by $\gamma_{E_1,E_2}:[E_1,E_2]\to\C$ the path $t\mapsto \zeta_n(E)$ (where $\zeta_n(E)$ represents the complex-plane point corresponding to $Z_{n,\theta}(E)$), then we have
    \begin{align*}
        \xi_{n,\theta}(E_2)-\xi_{n,\theta}(E_1)=\frac{1}{2\pi i}\int_{\gamma_{E_1,E_2}}\frac{dz}{z}=\frac{1}{2\pi i}\int_{E_1}^{E_2}\frac{\zeta_n'(E)}{\zeta_n(E)}dE.
    \end{align*}
\end{itemize}

Since IDS and the fibered rotation number are independent of the phase, it suffices to verify the following inequalities on the full-measure subset $\T_0$. And we have the following lemma:
\begin{lemma}\label{11-29-lemma5.20}
  (1)  For $0<E_1<E_2$ and $\theta\in\T_0$, we have
    \begin{align*}
        \left|2\xi_{n,\theta}(E_2)-2\xi_{n,\theta}(E_1)+\#\left\{\sigma(\widehat{\mathcal{S}}_{\theta}|_{[1,n]})\cap[E_1,E_2]\right\}\right|\le2,
    \end{align*}

(2) For $E_1<E_2<0$ and $\theta\in\T_0$, we also have
    \begin{align*}
        \left|2\xi_{n,\theta}(E_2)-2\xi_{n,\theta}(E_1)+\#\left\{\sigma(\widehat{\mathcal{S}}_{\theta}|_{[1,n]})\cap[E_1,E_2]\right\}\right|\le2,
    \end{align*}
\end{lemma}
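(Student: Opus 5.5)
We follow the classical Sturm-type argument relating zeros of the first coordinate of $Z_{n,\theta}(E)$ to eigenvalue counts, adapted to the singular setting through the Interlacing Lemma (Lemma \ref{11-23-lemma3.15}). Write $Z_{n,\theta}(E)=(\Delta_n(E),\Delta_{n-1}(E))^\top$. On an interval $[E_1,E_2]$ not containing $0$, both coordinates are real-analytic in $E$, since the prefactor $E^{-n}\lambda^{-n}$ is smooth and nonvanishing there. The zeros of $\Delta_n$ in $[E_1,E_2]$ are exactly the points of $\sigma(\widehat{\mathcal{S}}_\theta|_{[1,n]})\cap[E_1,E_2]$. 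Our plan has two steps. First, show that every such zero is simple and is crossed by $Z_{n,\theta}$ in a fixed angular direction. Second, deduce that the angle $\xi_{n,\theta}$ changes by exactly $-\tfrac12$ between consecutive crossings of the horizontal axis; note that crossings of the line $\{\Delta_n=0\}$ correspond to half-turns of $\zeta_n$. The total change is then $-\tfrac12$ times the count, up to boundary errors at most one half-turn at each end. This gives $|2\Delta\xi+\#|\le 2$.

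\textbf{Simplicity and sign of crossings.} For $\theta\in\T_0$ all eigenvalues are nonzero, and by Proposition \ref{5-8-proposition3.1} together with Lemma \ref{11-23-lemma3.15} the weights $\langle\boldsymbol v_j,\boldsymbol\delta_N\rangle$ are nonzero. The Interlacing Lemma shows that on $(0,\infty)$ the eigenvalues $E_n^{(j)}$, $j\ge m+1$, are strictly interlaced by those of $\widehat{\mathcal{S}}_\theta|_{[1,n-1]}$. In particular, the positive eigenvalues of the level-$n$ truncation are simple, since a repeated eigenvalue would leave no room for a strict interlacing point. The same holds on $(-\infty,0)$ with the indices $j\le m-1$. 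Hence $\Delta_n$ has simple zeros in $[E_1,E_2]$, $\Delta_{n-1}$ does not vanish at those zeros, and the zeros of the two alternate. We then use the partial-fraction formula \eqref{11-23-eq11} for $h=\Delta_{n-1}/(E\lambda\,\Delta_n)$ up to the constant factor. On each side of the origin, $h$ is monotone between its poles: its derivative is $-\frac{1}{z^2}\sum\frac{|\cdot|^2}{z-E_j}-\frac1z\sum\frac{|\cdot|^2}{(z-E_j)^2}$. We sign this by using the limits $\mp\infty$ computed in the Interlacing Lemma, which are exactly the signs of $h$ near each pole. From this, the ratio $\Delta_{n-1}/\Delta_n$ jumps from one infinite limit to the opposite one in the same direction at every pole within a given half-line $E>0$ or $E<0$. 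Geometrically, as $E$ increases, $Z_{n,\theta}(E)$ crosses the line $\{\Delta_n=0\}$ always in the same rotational sense. By the Interlacing Lemma the sign pattern is reversed on the negative half-line, but the monotonicity of the argument is preserved because of the extra $1/z$ factor; that is why both parts (1) and (2) have the same form.

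\textbf{Counting.} The direction is consistently clockwise, giving $\xi$ decreasing. The angle $\xi_{n,\theta}$ therefore passes the values corresponding to $\{\Delta_n=0\}$ (the vertical axis in the $\zeta$-plane) each time with decreasing lift, and never passes them backwards. Between two consecutive eigenvalues, $\xi$ cannot complete an extra half-turn without creating another zero. So if $k=\#\{\sigma\cap[E_1,E_2]\}$, the net change satisfies $\xi(E_2)-\xi(E_1)\in[-\tfrac{k+1}{2},-\tfrac{k-1}{2}]$, which is the claimed bound $|2\Delta\xi+k|\le 2$ (indeed $\le1$, with slack for endpoints that are eigenvalues).

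\textbf{Main obstacle.} The delicate step is the monotonicity of the crossing direction. Unlike the Schrödinger case, $\widehat{\mathcal{S}}_\theta$ is a singular strip operator, and the relevant Weyl-type function carries the factor $1/z$. So we must verify separately on $E>0$ and $E<0$ that the sign of $\frac{d}{dE}(\Delta_{n-1}/\Delta_n)$ is constant. We would attempt this by rewriting $h$ as $\Delta_{n-1}/\Delta_n$ and using the interlacing-based sign of the residues $|\langle\boldsymbol v_j,\boldsymbol\delta_n\rangle|^2/E_n^{(j)}$, which all have the same sign on each half-line. Then $\Delta_{n-1}/\Delta_n$ is a Herglotz-type (monotone) rational function on each half-line, which yields the required uniform crossing direction.
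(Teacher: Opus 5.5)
Your argument is essentially the paper's: both are Sturm-type oscillation counts in which the Interlacing Lemma makes the zeros of $\Delta_n$ and $\Delta_{n-1}$ alternate on each half-line and fixes a clockwise crossing direction, so each eigenvalue in $[E_1,E_2]$ accounts for one half-turn (the paper phrases this as quarter-turns of $-\pi/2$ between consecutive zeros of $\Delta_n$ and $\Delta_{n-1}$). One small correction: $h$ itself need not be monotone between its poles on a half-line, because its residues $|\langle \boldsymbol{v}_j,\boldsymbol{\delta}_n\rangle|^2/E_n^{(j)}$ change sign across $0$; you never need this, since $\Delta_{n-1}/\Delta_n=\lambda E\,h(E)=\lambda\langle\boldsymbol{\delta}_n,(E-\widehat{\mathcal{S}}_{\theta}|_{[1,n]})^{-1}\boldsymbol{\delta}_n\rangle$ has positive residues $\lambda|\langle \boldsymbol{v}_j,\boldsymbol{\delta}_n\rangle|^2$ and is strictly decreasing between all of its poles.
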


\begin{proof}
(1) By Lemma \ref{11-23-lemma3.15}, it suffices to consider the case where 
$$\max\{0,E_{n}^{(k)}\} < E_1 < E_{n-1}^{(k-1)} \quad \text{and} \quad E_{n}^{(l)} < E_2 < E_{n-1}^{(l-1)}.$$  
The integral $\xi_{n}(E_2)-\xi_n(E_1)=\int_{\gamma_{E_1,E_2}}dz/z$ equals $1/(2\pi i)$ multiplied by
\begin{align*}
    \int_{\gamma_{E_1,E_{n-1}^{(k-1)}}}(dz/z)+\int_{\gamma_{E_{n-1}^{(k-1)},E_{n}^{(k+1)}}}(dz/z)+\int_{\gamma_{E_{n}^{(k+1)},E_{n-1}^{(k)}}}(dz/z)+\cdots+\int_{\gamma_{E_{n}^{(l)},E_{2}}}(dz/z).
\end{align*}
Since each term of this sum, except for the first and the last, equals $-\pi/2$ (note that if $\Delta_n(E)>0$ and $\Delta_{n-1}(E)=0$, then according to Lemma \ref{11-23-lemma3.15}, $\Delta_n(E+)>0$ and $\Delta_{n-1}(E+)<0$) and the extreme terms are smaller than $\pi/2$, we observe that this integral equals $-(l-k)\pi$ up to an approximation of $\pi$. It is easy to see that this gives the conclusion of the first part, and the second part is similar. 
\end{proof}

\begin{proof}[Proof of Theorem \ref{11-29-theorem5.21}]
By Lemma \ref{11-29-lemma5.20}, for $E_1 < E_2 < 0$ or $0 < E_1 < E_2$,
\begin{align*}
    \lim_{n \to \infty} \left( \frac{1}{n} \xi_{n,\theta}(E_2) - \frac{1}{n} \xi_{n,\theta}(E_1) \right) + \lim_{n \to \infty} \frac{1}{n} \#\left\{\sigma(\widehat{\mathcal{S}}_{\theta}|_{[1,n]}) \cap [E_1,E_2]\right\} = 0.
\end{align*}
Using the definition of the rotation number $\rho(2\alpha, \hat A^E)$ and the integrated density of states $\hat{n}(E)$, we obtain
\begin{align*}
    \rho(2\alpha, \hat A^{E_2}) - \rho(2\alpha, \hat A^{E_1}) + \hat{n}(E_2) - \hat{n}(E_1) = 0.
\end{align*}

We now consider two cases:
\begin{enumerate}
    \item[(1)] For $E_1 < E_2 < 0$: Taking $E_1 \to -\infty$ and using $\rho(2\alpha, \hat A^{-\infty}) = \frac{1}{2}$ yields
    \[
        \hat{n}(E_2) = \frac{1}{2} - \rho(2\alpha, \hat A^{E_2}).
    \]
    
    \item[(2)] For $0 < E_1 < E_2$: Taking $E_2 \to +\infty$ and using $\rho(2\alpha, \hat A^{+\infty}) = 0$ yields
    \[
        \hat{n}(E_1) = 1 - \rho(2\alpha, \hat A^{E_1}).
    \]
\end{enumerate}

By continuity of $\hat{n}(E)$ at $E = 0$, we equate the expressions:
\begin{align*}
    1 - \rho(2\alpha, \hat A^{0+}) = \frac{1}{2} - \rho(2\alpha, \hat A^{0-}).
\end{align*}

Since $\hat{n}(E) = n(E) = 1 - 2\rho(T_\alpha, S^v_E)$, we conclude
\begin{align*}
    \rho(2\alpha, \hat A^E) = 
    \begin{cases} 
        \displaystyle
        2\rho(T_\alpha, S^v_E) & \text{for } E > 0, \\[2mm]
        \displaystyle
        2\rho(T_\alpha, S^v_E) - \frac{1}{2} & \text{for } E < 0.
    \end{cases}
\end{align*}
\end{proof}

\section{ reducibility result in the subcritical regime }

In this section, we first locate the subcritical regime of  $ D_E^v$ and $\hat A^E$. Then, we will give reducibility theorems for cocycles  $(2\alpha, D_E^v)$ and $(2\alpha, \hat A^E)$ in subcritical regime, respectively.

\subsection{Subcritical and Supercritical regime}

For  the cocycle $(2\alpha, D_E^v)$, we have 

\begin{lemma}[\cite{exactm}]\label{exactme}
Let $\alpha \in \mathbb{R} \setminus \mathbb{Q}$ and $E \in \Sigma$. Then:
\begin{itemize}
    \item The cocycle $(2\alpha, D_E^v)$ is \textit{subcritical} if and only if $|E\lambda| < 1$
    \item The cocycle $(2\alpha, D_E^v)$ is \textit{supercritical} if and only if $|E\lambda| > 1$
\end{itemize}
\end{lemma}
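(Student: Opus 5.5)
The plan is to compute the Lyapunov exponent of the complexified cocycle $(2\alpha, D^v_E(\cdot+i\varepsilon))$ explicitly for large $|\varepsilon|$. Avila's global theory (convexity, piecewise linearity, and quantization of acceleration, Theorem \ref{5-13-theorem2.1}) then pins down $\varepsilon\mapsto L(2\alpha,D^v_{E,\varepsilon})$ for all $\varepsilon$. Since $\det D^v_E=1$, the acceleration is an integer.

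\textbf{Step 1: asymptotics at large $\varepsilon$.} For $\varepsilon\to+\infty$ write $\cos 2\pi(\theta+i\varepsilon)=\tfrac12 e^{2\pi\varepsilon}e^{-2\pi i\theta}(1+o(1))$, so that
\begin{equation*}
D^v_E(\theta+i\varepsilon)=e^{2\pi\varepsilon}e^{-2\pi i\theta}\left[\begin{pmatrix}-\lambda E&0\\-\lambda&0\end{pmatrix}+o(1)\right].
\end{equation*}
The leading matrix has spectral radius $|\lambda E|$, and it stays constant in $\theta$ after the scalar phase is removed. By continuity of the Lyapunov exponent in the cocycle (Bourgain--Jitomirskaya, or the $M_2(\mathbb C)$ version of Jitomirskaya--Marx, since the limit matrix is singular), we get
\begin{equation*}
L(2\alpha,D^v_{E,\varepsilon})=2\pi\varepsilon+\ln|\lambda E|+o(1)
\end{equation*}
for $\varepsilon$ large, as long as $E\neq0$. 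By quantization this expression is exactly affine for $\varepsilon\ge\varepsilon_0$, so $L=2\pi\varepsilon+\ln|\lambda E|$ there. The function is even in $\varepsilon$ and convex.

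\textbf{Step 2: the supercritical case $|\lambda E|>1$.} Since $L(\varepsilon)\ge 0$ is convex, even, and piecewise affine with integer slopes (in units of $2\pi$), its slope can only decrease going toward $0$. If the function stayed on the line $2\pi|\varepsilon|+\ln|\lambda E|$ down to $\varepsilon=0$, then $L(0)=\ln|\lambda E|>0$. To rule out a break I would use that the slope at large $\varepsilon$ is $1$ and slopes are integers. If there were a kink at some $\varepsilon_1>0$, the slope on $(0,\varepsilon_1)$ would be $0$ (it is nonnegative by evenness and convexity). Then $L(0)=2\pi\varepsilon_1+\ln|\lambda E|>0$ would hold with acceleration $0$ at $\varepsilon=0$, which means uniform hyperbolicity. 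That contradicts $E\in\Sigma$, since $E\in\Sigma$ iff $(T_\alpha,S^v_E)$ is not uniformly hyperbolic iff the iterate $(2\alpha,D_E^v)$ is not. Hence $L(0)=\ln|\lambda E|>0$ and $\omega(0)=1>0$: the cocycle is supercritical.

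\textbf{Step 3: the subcritical case $|\lambda E|<1$, and the converse.} Now the affine line $2\pi\varepsilon+\ln|\lambda E|$ becomes negative at $\varepsilon<\varepsilon_*:=-\ln|\lambda E|/2\pi$, so $L$ must have a kink at some $\varepsilon_1\ge\varepsilon_*>0$ with slope $0$ on $[0,\varepsilon_1]$. Hence $L(0)=L(\varepsilon_1)=2\pi\varepsilon_1+\ln|\lambda E|\ge0$. If this were positive, zero acceleration would again force uniform hyperbolicity, contradicting $E\in\Sigma$. So $L(0)=0$ with $\omega(0)=0$ on a neighbourhood, which is subcriticality. The boundary case $|\lambda E|=1$ gives $L(0)=0$ with $\omega(0)=1$, which is critical. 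Together with the dichotomy of Steps 2--3 this yields both ``if and only if'' statements. The value $E=0$ lies in the subcritical branch: there the leading term vanishes and one instead gets slope $0$ up to $L=0$, which is easy to check directly.

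\textbf{Main obstacle.} The delicate step is Step 1: justifying the asymptotic formula when the leading matrix is singular (rank one). I would handle it directly. The leading part is rank one, with eigenvalue $-\lambda E$ on eigenvector $(E,1)^\top$. Its products have norm exactly $|\lambda E|^n$ times bounded factors, and the $o(1)$ perturbation of a dominated rank-one cocycle keeps the exponent within $o(1)$. This follows from upper semicontinuity for the upper bound and from the cone-field argument for the lower bound.
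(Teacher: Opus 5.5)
Your proposal is correct. It is essentially the argument the paper runs for the dual cocycle $(2\alpha,\hat A^E)$ in Lemma~\ref{5-20-lemma4.6}; for this statement itself the paper only cites \cite{exactm}, where it is proved the same way. The shared steps are: rank-one asymptotics of the complexified cocycle at large $\varepsilon$, continuity of the Lyapunov exponent, and quantization together with evenness to force $L(\varepsilon)=\max\{2\pi|\varepsilon|+\ln|\lambda E|,L(0)\}$. Then Avila's ``regular with positive exponent implies uniformly hyperbolic'' theorem rules out a flat positive piece, since $E\in\Sigma$; your direct handling of $E=0$ and of the critical case $|\lambda E|=1$ completes the equivalences.
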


For the cocycle $(2\alpha, \hat A^E)$, we define the uniformly hyperbolic set as $\mathcal{UH} = \{ E \in \mathbb{R} : (2\alpha, \hat A^E) \text{ is uniformly hyperbolic} \}$. The spectral properties are characterized by:

\begin{proposition}\cite{wang2022}\label{5-11-proposition5.1}
$\Sigma\backslash\{0\}=(\R\backslash\mathcal{UH})\backslash\{0\}$. 
\end{proposition}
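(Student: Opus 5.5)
The plan is to prove both inclusions for fixed $E\neq 0$ by reducing the eigenvalue equation of $\widehat{\mathcal{S}}_\theta$ to the scalar second-order equation \eqref{eq6} and then running Johnson's classical argument \cite{johnson1986exponential} for $(2\alpha,\hat A^E)$. Throughout, $\sigma(\widehat{\mathcal{S}}_\theta)=\Sigma$ for every $\theta$ by Lemma \ref{lem:spectral_equivalence}, so we may choose $\theta$ freely.

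\emph{Step 1 (elimination).} Write $\boldsymbol{u}(n)=(a(n),b(n))^\top$ and $\boldsymbol{f}(n)=(f_1(n),f_2(n))^\top$, and set $x_n=2n\alpha+\theta$. The equation $(\widehat{\mathcal{S}}_\theta-z)\boldsymbol{u}=\boldsymbol{f}$ with $z\neq0$ reads
\begin{align*}
\lambda a(n+1)+\lambda a(n-1)+(1+e^{-2\pi i x_n})b(n)-za(n)&=f_1(n),\\
(1+e^{2\pi i x_n})a(n)-zb(n)&=f_2(n).
\end{align*}
The second line gives $b(n)=z^{-1}\bigl((1+e^{2\pi i x_n})a(n)-f_2(n)\bigr)$. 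Substituting into the first line yields
\[
a(n+1)+a(n-1)-\tfrac{z^2-2-2\cos 2\pi x_n}{z\lambda}\,a(n)=g(n),\qquad g(n)=\tfrac{1}{\lambda}\Bigl(f_1(n)+\tfrac{1+e^{-2\pi i x_n}}{z}f_2(n)\Bigr).
\]
This is exactly the inhomogeneous version of \eqref{eq6}. The maps $\boldsymbol f\mapsto g$ and $a\mapsto\boldsymbol u$ are bounded on $\ell^2$, with bounds depending only on $|z|^{-1}$ and $\lambda$.

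\emph{Step 2 ($\mathcal{UH}\Rightarrow$ resolvent set).} Let $E\neq0$ with $(2\alpha,\hat A^E)$ uniformly hyperbolic. Uniform hyperbolicity is open, and $\hat A^z$ depends analytically on $z$ near $E$, so $(2\alpha,\hat A^z)$ remains uniformly hyperbolic for complex $z$ in a small disk around $E$; the complexified cocycle is still a bounded perturbation, so dominated splitting persists. The stable and unstable directions give solutions $a^{\pm}$ of the homogeneous equation decaying exponentially at $\pm\infty$, with nonzero constant Wronskian $W$. The Green kernel $G(n,m)=a^{+}(\max(n,m))\,a^{-}(\min(n,m))/W$ decays like $e^{-c|n-m|}$, so by Schur's test it defines a bounded operator solving the scalar equation. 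Composing with Step 1 produces a bounded inverse of $\widehat{\mathcal{S}}_\theta-E$. Hence $E\notin\Sigma$.

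\emph{Step 3 (non-$\mathcal{UH}\Rightarrow$ spectrum).} This is the main obstacle. Let $E\neq0$ with $(2\alpha,\hat A^E)$ not uniformly hyperbolic. The base $R_{2\alpha}$ is minimal and $\hat A^E$ is $\mathrm{SL}(2,\R)$-valued. By the Sacker--Sell / Johnson alternative, there are a phase $\theta$ and a unit vector $w$ with $\sup_{n\in\Z}\|\hat A^E_n(\theta)w\|<\infty$. This produces a bounded nonzero solution $a$ of \eqref{eq6}, and \eqref{eq5} gives a bounded solution $\boldsymbol{u}$ of $\widehat{\mathcal{S}}_\theta\boldsymbol{u}=E\boldsymbol{u}$. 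To make a Weyl sequence, I would take smooth cutoffs $\chi_L(n)=\chi(n/L)$ and set $\boldsymbol{u}_L=\chi_L\boldsymbol{u}$. Then $\|(\widehat{\mathcal{S}}_\theta-E)\boldsymbol{u}_L\|\lesssim L^{-1}\,(\#\text{support})^{1/2}\sup|a|$, since the commutator only involves $\widehat C$ and $\chi_L(n\pm1)-\chi_L(n)=O(L^{-1})$.

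The delicate point is the lower bound on $\|\boldsymbol u_L\|$. We need $\|\boldsymbol u_L\|\ge\|\chi_L a\|\gtrsim L^{1/2}$, which requires $|a(n)|$ not to be small on average. I would argue as follows. The vector $(a(n),a(n-1))$ stays bounded, and it is also bounded below because $\det\hat A^E=1$ and $\sup_n\|\hat A^E_n\|$ is finite along this orbit, so the inverses are bounded too. Hence $\max(|a(n)|,|a(n-1)|)\ge c>0$ for all $n$, which gives $\sum_{|n|\le L}|a(n)|^2\ge cL$. The ratio then tends to $0$, so $E\in\sigma(\widehat{\mathcal{S}}_\theta)=\Sigma$.

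Combining Steps 2 and 3 gives $\Sigma\setminus\{0\}=(\R\setminus\mathcal{UH})\setminus\{0\}$.
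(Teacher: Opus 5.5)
Your overall strategy is right, and it is the reduction the paper has in mind. The paper gives no proof of this proposition and only cites \cite{wang2022}, but its text after the definition of $\hat A^E$ says that eliminating $b(n)$ reduces the spectral analysis of $\widehat{\mathcal{S}}_\theta$ to the cocycle $(2\alpha,\hat A^E)$. Step 1 is correct. Step 2 is fine, though the complex disk is unnecessary. You should also add one line saying why the composed Green operator is a two-sided inverse: for instance, $\widehat{\mathcal{S}}_\theta$ is self-adjoint, or uniform hyperbolicity excludes nonzero $\ell^2$ solutions of the scalar equation.

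The paragraph you call the delicate point is wrong.
\begin{itemize}
\item Boundedness of the single orbit $\hat A^E_n(\theta)w$ does not give $\sup_n\|\hat A^E_n(\theta)\|<\infty$. From $\det\hat A^E_n=1$ you only get $\|(a(n),a(n-1))\|\ge\|\hat A^E_n(\theta)\|^{-1}$, and this may tend to $0$.
\item The claimed bound $\max(|a(n)|,|a(n-1)|)\ge c>0$ fails for every bounded solution that is an $\ell^2$ eigenfunction. Such solutions are exactly what one expects when $L(2\alpha,\hat A^E)>0$, i.e.\ for $|E\lambda|<1$: at Oseledets-regular phases every bounded solution then decays exponentially at both ends.
\item You do not need this lower bound. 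Your own commutator estimate gives $\|(\widehat{\mathcal{S}}_\theta-E)\boldsymbol{u}_L\|\lesssim L^{-1/2}\sup|a|$. Meanwhile $\|\boldsymbol{u}_L\|\ge|\chi(n_0/L)|\,|a(n_0)|\to|a(n_0)|>0$ for any fixed $n_0$ with $a(n_0)\neq0$, so the Weyl ratio tends to $0$ anyway.
\end{itemize}
With that paragraph replaced by this observation, Step 3 is complete. Its real content is the Sacker--Sell alternative, which applies because $R_{2\alpha}$ is minimal and the cocycle is $\mathrm{SL}(2,\R)$-valued.

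You can also avoid redoing Johnson's argument. For real $E\neq0$, define the real quasi-periodic Schr\"odinger operator
$[H^E_\theta a](n)=a(n+1)+a(n-1)-\frac{E^2-2-2\cos2\pi(2n\alpha+\theta)}{E\lambda}\,a(n)$.
Running Step 1 in both directions shows that $\widehat{\mathcal{S}}_\theta-E$ is boundedly invertible if and only if $H^E_\theta$ is. For surjectivity take $\boldsymbol{f}=(\lambda g,0)$; for injectivity lift $a$ via $b(n)=E^{-1}(1+e^{2\pi i x_n})a(n)$. So $E\notin\Sigma$ if and only if $0\notin\sigma(H^E_\theta)$. The Schr\"odinger cocycle of $H^E_\theta$ at energy $0$ is exactly $\hat A^E$, so Johnson's theorem \cite{johnson1986exponential} gives the proposition directly.
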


Applying Proposition \ref{5-11-proposition5.1}, one has
\begin{lemma}\label{5-20-lemma4.6}
Suppose $\alpha\in\R\backslash\Q$, then $L(2\alpha,\hat A^E)=\max\{-\ln|E\lambda|,0\}$ for every $E\in\Sigma\backslash\{0\}$. Moreover, for every $E\in\Sigma\backslash\{0\}$, the cocycle $(2\alpha,\hat A^E)$  is
\begin{itemize}
    \item subcritical iff $|E\lambda|>1$;
    \item supercritical iff $|E\lambda|<1$.
\end{itemize}
\end{lemma}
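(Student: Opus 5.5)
We compute the Lyapunov exponent of $(2\alpha,\hat A^E)$ through Avila's global theory, complexifying the phase. Write
\[
\hat A^E(\theta)=\frac{1}{E\lambda}\begin{pmatrix} E^2-2-2\cos2\pi\theta & -E\lambda\\ E\lambda & 0\end{pmatrix}=:\frac{1}{E\lambda}M^E(\theta),
\]
so that $L(2\alpha,\hat A^E_\varepsilon)=L(2\alpha,M^E_\varepsilon)-\ln|E\lambda|$, where the subscript $\varepsilon$ denotes the shift $\theta\mapsto\theta+i\varepsilon$. Here $\det M^E=E^2\lambda^2\neq0$ for $E\neq 0$.

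\textbf{Step 1: large imaginary part.} For $\varepsilon\to+\infty$ we have $M^E_\varepsilon(\theta)=e^{2\pi\varepsilon}e^{-2\pi i\theta}\bigl(\begin{smallmatrix}-1&0\\0&0\end{smallmatrix}\bigr)+O(1)$. By continuity of the Lyapunov exponent for $M_2(\mathbb{C})$ cocycles (Jitomirskaya--Marx), this gives $L(2\alpha,M^E_\varepsilon)=2\pi\varepsilon+o(1)$. Since $L(2\alpha,M^E_\varepsilon)$ is convex and piecewise affine in $\varepsilon$ with slopes in $\pi\mathbb{Z}$ (Theorem~\ref{5-13-theorem2.1}, applicable because $\det M^E$ is constant and nonzero), it follows that $L(2\alpha,M^E_\varepsilon)=2\pi\varepsilon$ for $\varepsilon$ large. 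Consequently
\[
L(2\alpha,\hat A^E_\varepsilon)=2\pi\varepsilon-\ln|E\lambda| \quad\text{for } \varepsilon\gg1 .
\]

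\textbf{Step 2: descending to $\varepsilon=0$.} The function $\varepsilon\mapsto L(2\alpha,\hat A^E_\varepsilon)$ is even, convex and nonnegative, and its slopes lie in $2\pi\mathbb{Z}$ because $\hat A^E\in\mathrm{SL}(2,\mathbb{C})$. The slope for large $\varepsilon$ is $2\pi$, so the slope on $[0,\infty)$ is either $0$ or $2\pi$, with at most one break point. Hence
\[
L(2\alpha,\hat A^E)=\max\{L_{\mathrm{aff}}(0),\,c\},
\]
where $L_{\mathrm{aff}}(0)=-\ln|E\lambda|$ is the affine continuation evaluated at $\varepsilon=0$ and $c\ge 0$. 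By convexity this yields the lower bound $L(2\alpha,\hat A^E)\ge\max\{-\ln|E\lambda|,0\}$.

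\textbf{Step 3: upper bound, the main obstacle.} We must show $c=0$, i.e.\ $L(2\alpha,\hat A^E)\le\max\{-\ln|E\lambda|,0\}$ for $E\in\Sigma\setminus\{0\}$. This is where $E\in\Sigma$ enters. By Proposition~\ref{5-11-proposition5.1}, $(2\alpha,\hat A^E)$ is not uniformly hyperbolic for such $E$. If $c>0$ and $c>-\ln|E\lambda|$, then $L(2\alpha,\hat A^E_\varepsilon)=c>0$ on a neighborhood of $\varepsilon=0$, so the acceleration is $0$ with positive exponent. By Avila's characterization (regular with $L>0$ implies uniformly hyperbolic), the cocycle would be uniformly hyperbolic, a contradiction. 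Hence
\[
L(2\alpha,\hat A^E)=\max\{-\ln|E\lambda|,0\}.
\]
A subtle point is the borderline situation in which the break occurs exactly at $\varepsilon=0$; this is handled by the same argument, since then the slope to the right of $0$ is $2\pi>0$ and the cocycle is either critical or supercritical, but not uniformly hyperbolic, which is consistent. An alternative route to the upper bound is via Theorem~\ref{11-29-theorem5.21} combined with the Thouless formula and the known exponent of $D_E^v$ from Lemma~\ref{exactme}; I would use this only as a cross-check.

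\textbf{Step 4: classification.} If $|E\lambda|>1$, then $L\equiv0$ on the interval $|\varepsilon|<\frac{\ln|E\lambda|}{2\pi}$, so $L=0$ and $\omega=0$, i.e.\ the cocycle is subcritical. If $|E\lambda|<1$, then $L=-\ln|E\lambda|>0$, and since the cocycle is not uniformly hyperbolic, $\omega>0$, so it is supercritical. If $|E\lambda|=1$, then $L=0$ with $\omega=1$, which is critical. This establishes both equivalences.
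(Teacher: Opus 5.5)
Your proposal is correct and follows essentially the same route as the paper: the large-$\varepsilon$ asymptotics of the complexified cocycle, continuity of the Lyapunov exponent and quantization of the acceleration give $L(2\alpha,\hat A^E_\varepsilon)=\max\{2\pi\varepsilon-\ln|E\lambda|,\,L(2\alpha,\hat A^E)\}$ for $\varepsilon\ge 0$. The upper bound then comes from Avila's ``regular with $L>0$ implies uniformly hyperbolic'' together with Proposition~\ref{5-11-proposition5.1}, and your Step 4 just makes explicit the classification (including the critical case $|E\lambda|=1$) that the paper reads off from the same formula.
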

\begin{proof}
First we rewrite the matrix $\hat A^E(\theta)$ as
	\begin{align*}
	    \hat A^E(\theta)=\frac{1}{E\lambda}\begin{pmatrix}E^2-2-e^{i2\pi\theta}-e^{-i2\pi\theta}&-E\lambda\\E\lambda&0\end{pmatrix},
	\end{align*}
	then we complexify the phase
	\begin{equation*}
	\begin{split}
		\hat A^E_{\epsilon}&:=\hat A^E(\theta+i\epsilon)\\&=\frac{1}{E\lambda}\begin{pmatrix}E^2-2-e^{i2\pi(\theta+i\epsilon)}-e^{-i2\pi(\theta+i\epsilon)}&-E\lambda\\E\lambda&0\end{pmatrix},
		\end{split}
	\end{equation*}
thus for sufficiently large $\epsilon$
	\begin{equation*}
		\hat A^E(\theta+i\epsilon)=\frac{e^{2\pi\epsilon}e^{-i2\pi\theta}}{E\lambda}\begin{pmatrix}
			-1&0\\0&0
		\end{pmatrix}+ o(1).
	\end{equation*}	
	Let $B=\frac{1}{E\lambda}\begin{pmatrix}
		-1&0\\0&0
	\end{pmatrix}$. 
	Then $
	B^n=\frac{1}{(-E\lambda)^n}\begin{pmatrix}
		1&0\\0&0
	\end{pmatrix}.
	$
	It is obvious  that
	\begin{equation*}
		\lim_{n\rightarrow\infty} \frac{\ln\|B^n\|}{n}=\lim_{n\rightarrow\infty}\frac{-\ln|(-E\lambda)^n|}{n}
		=-\ln|E\lambda|.
	\end{equation*}
	By the continuity of Lyapunov exponent \cite{BJ}, we have
	\begin{equation*}
		L(2\alpha, \hat A^E_{\epsilon})=2\pi\epsilon-\ln|\lambda E| +o(1).
	\end{equation*}
	By Theorem \ref{5-13-theorem2.1}, $\omega(2\alpha, \hat A^E_{\epsilon})=1$ and  
	\begin{equation*}
		L(2\alpha, \hat A^E_{\epsilon})=2\pi\epsilon-\ln|E\lambda|
	\end{equation*} for  sufficiently large $\epsilon$. 
	By real-symmetry,   $\omega(2\alpha, \hat A^E_{\epsilon})$
	is either 0 or 1 for
	$\epsilon\geq 0$.
	This implies that
	\begin{equation}\label{lemosaic} L(2\alpha, \hat A^E_{\epsilon})= \max\{-\ln|E\lambda|+2\pi \epsilon, L(2\alpha, \hat A^E)\}. 
	\end{equation}
	As a consequence, we have  
	\begin{equation*}
		L(2\alpha, \hat A^E)\geq  \max\{-\ln|E\lambda|,0\} .
	\end{equation*}
	If $L(2\alpha, \hat A^E)> \max\{-\ln|E\lambda|,0\} 
	$, then  $L(2\alpha, \hat A^E)>0$ and $\omega(2\alpha, \hat A^E_{\epsilon})=0$ for sufficient small $\epsilon>0$, which implies that $(2\alpha, \hat A^E)$ is   uniformly hyperbolic by Theorem 6 of \cite{A4}. It contradicts with $E\in \Sigma\backslash\{0\}$ according to Proposition \ref{5-11-proposition5.1}. 
	Therefore, one has
	$$L(2\alpha, \hat A^E)= \max\{-\ln|E\lambda|,0\} $$ for $E\in\Sigma\backslash\{0\}.$
	Moreover, \eqref{lemosaic} implies that    $(2\alpha, \hat A^E)$ is subcritical (supercritical) if and only if $|E\lambda|>1$ ($|E\lambda|<1$).
\end{proof}

\begin{remark} We would like to remark that
   Lemma \ref{exactme}  and Lemma \ref{5-20-lemma4.6} assert that the supercritical regime of $\mathcal{H}_\theta$ is consistent with the subcritical regime of $\widehat{\mathcal{S}}_\theta$. 
\end{remark}

\subsection{Reducibility result for cocycle $(2\alpha, D_E^v)$ and $(2\alpha, \hat A^E)$}

 We first concentrate on quasi-periodic $SL(2, \R)$ cocycle.  

\begin{theorem}\label{reduc-sl}
   Let $\alpha \in \mathbb{R} \backslash \mathbb{Q}$ and $A \in C_{h}^\omega(\mathbb{T}, \mathrm{SL}(2, \mathbb{R}))$ for some $h>0$, such that $(\alpha, A)$ is subcritical. There exist $\tilde{k}=\tilde{k}(v, \alpha)>0$, for  $2\rho(\alpha,A)-k\alpha \in$ $\mathbb{Z}$ with $|k| \geq \tilde{k}$, there exist $U \in C_{\frac{c}{2 \pi}}^\omega(\mathbb{T}, \operatorname{PSL}(2, \mathbb{R}))$ and $\varphi \in \mathbb{R}$ such that
$$
U(\cdot+\alpha)^{-1} A(\cdot) U(\cdot)=\left(\begin{array}{cc}
1 & \varphi \\
0 & 1
\end{array}\right).
$$
\end{theorem}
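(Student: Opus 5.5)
Theorem \ref{ARC} (the Almost Reducibility Conjecture) reduces the statement to the standard quantitative local picture for almost reducible cocycles with resonant rotation number. Since $(\alpha,A)$ is subcritical, it is almost reducible. By the quantitative form of almost reducibility from \cite{A4,Ge2023}, we may assume there is $h'>0$, depending only on $(\alpha,A)$ and in particular on $h$ and the width of the subcritical strip, such that for every $\epsilon>0$ there are $B_\epsilon\in C^\omega_{h'}(\mathbb{T},\mathrm{PSL}(2,\mathbb{R}))$ and $A_\epsilon\in SL(2,\mathbb{R})$ with
\[
B_\epsilon(\cdot+\alpha)^{-1}A(\cdot)B_\epsilon(\cdot)=A_\epsilon e^{f_\epsilon(\cdot)},\qquad \|f_\epsilon\|_{h'}<\epsilon .
\]
We then fix $c:=2\pi h'$ (or a slightly smaller multiple), so that $U$ will live in $C^\omega_{c/(2\pi)}$. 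Once the cocycle is in this perturbative regime, we run a KAM scheme in the style of Eliasson/Hou--You/Avila--You--Zhou. At each step we eliminate the nonresonant Fourier modes of the perturbation. When the rotation number of the constant part is resonant with some $k_j\alpha$, we remove the resonance by a rotation $R_{k_j\theta/2}$. This rotation has degree $k_j$ (half-integer rotations are allowed because we work in $\mathrm{PSL}$), and it shifts the fibered rotation number by $k_j\alpha/2$.

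The key step is to use the hypothesis $2\rho(\alpha,A)=k\alpha \bmod \mathbb{Z}$. Along the scheme, the rotation number of the renormalized cocycle equals $\rho(\alpha,A)-\tfrac12\sum_j k_j\alpha$ modulo $\mathbb{Z}$, since conjugation by a degree-$n$ map in $\mathrm{PSL}$ shifts $\rho$ by $n\alpha/2$. Once the resonances are exhausted, the limiting rotation number is therefore $\tfrac12(k-\sum k_j)\alpha \bmod \mathbb{Z}$. Because $\alpha$ is irrational, a limiting elliptic rotation would force this quantity to vanish modulo $\mathbb{Z}$, so that the total accumulated degree equals $k$. Along the way we track exponential decay of the perturbation, $\|f_j\|\le e^{-c'|k_j|}$, as in the resonance estimates of \cite{AYZ}. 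The largeness condition $|k|\ge\tilde k$ guarantees that the initial $\epsilon$ can be chosen so small that the KAM step is applicable with $|k|$ beyond the first resonance. It also guarantees that the accumulated conjugations converge in a fixed strip of width $c/(2\pi)$.

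We then conclude as follows. After a finite number of resonant steps, the total degree is $k$ and the remaining cocycle has zero rotation number. It is a constant $A_\infty$ plus an exponentially small perturbation with no remaining resonances, so the scheme converges and the cocycle is reducible to a constant $A_\infty\in SL(2,\mathbb{R})$ with $\rho(\alpha,A_\infty)=0$. Such a constant matrix is parabolic or the identity (hyperbolic is excluded because the cocycle has zero Lyapunov exponent, being subcritical). A final constant conjugation in $SL(2,\mathbb{R})$, possibly composed with a rotation by $\pi/2$ in $\mathrm{PSL}$, brings it to $\begin{pmatrix}1&\varphi\\0&1\end{pmatrix}$, and this gives $U$.

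The main obstacle is the convergence of the resonant KAM scheme with the analyticity strip of $U$ uniform in $k$. Each resonance step costs a rotation $R_{k_j\theta/2}$, whose $C^\omega_{h}$ norm grows like $e^{\pi|k_j|h}$. These losses have to be compensated by the exponential smallness $e^{-c'|k_j|}$ of the perturbations, which in turn relies on the quantitative almost-reducibility estimates (Diophantine-type control of small divisors away from resonances). To handle this I would follow the bookkeeping in \cite{AYZ} and Hou--You.
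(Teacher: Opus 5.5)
Your overall strategy is viable, but it is packaged differently from the paper's. The paper never runs a KAM scheme itself. After ARC it applies the Avila--Jitomirskaya normal form (Theorem~\ref{transtoschr}) to conjugate $(\alpha,A)$ by some $Z$ to a Schr\"odinger cocycle $S_E^V$ with small $V$. It then quotes the parabolic reducibility of \cite[Corollary 5.4]{LYZZ} (Theorem~\ref{schr-reduc}) for that Schr\"odinger cocycle and sets $U=ZY$. You skip the Schr\"odinger normal form and re-derive the content of that corollary directly for $A_\epsilon e^{f_\epsilon}$ by a resonant KAM scheme, which is essentially the mechanism behind the cited corollary.

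The paper's route gets the fixed strip $c/2\pi$ and the threshold $\tilde k$ for free from a published statement. In doing so it silently absorbs the shift $k\mapsto k\mp\deg Z$ and the check $E\in\Sigma^{\mathrm{sub}}_{V,\alpha}$. Your route is more self-contained and makes the degree bookkeeping visible, but the uniform-strip convergence you defer to AYZ/Hou--You becomes your burden. Your bookkeeping must also include $\deg B_\epsilon$: the local scheme sees the residual $k-\deg B_\epsilon$, not $k$.

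The step that fails as written is the termination of the resonances, which is the heart of the argument. Irrationality of $\alpha$ together with ``a limiting elliptic rotation'' does not force $k-\sum_j k_j=0$, since an elliptic constant can have any rotation number. Moreover, ``once the resonances are exhausted'' presupposes that only finitely many resonant steps occur, which you never justify. The standard argument is quantitative. Call step $j$ resonant if $\|2\rho(A_j)-m\alpha\|_{\mathbb{T}}<\epsilon_j^{\sigma}$ for some $0<|m|\le N_j$, with $0<\sigma<\frac12$. Let $r_j$ be the residual degree after step $j$, so that $2\rho(\alpha,A_je^{f_j})=r_j\alpha \bmod \mathbb{Z}$.
\begin{itemize}
\item The estimate $|\rho(\alpha,A_je^{f_j})-\rho(A_j)|\lesssim\epsilon_j^{1/2}$ shows that, as soon as $N_j\ge|r_j|\neq 0$, the mode $m=r_j$ is resonant. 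Hence a nonresonant tail forces the residual to be $0$.
\item Conversely, right after a resonant step at stage $j$, the new constant satisfies $\|2\rho(A_{j+1})\|_{\mathbb{T}}\lesssim\epsilon_j^{\sigma}$, hence $\|r_{j+1}\alpha\|_{\mathbb{T}}\lesssim\epsilon_j^{\sigma}$. Since $|r_{j+1}|\le|k|+CN_j$ and $N_j\sim|\ln\epsilon_j|$, the Diophantine bound $\|r\alpha\|_{\mathbb{T}}\ge\gamma|r|^{-\tau}$ forces $r_{j+1}=0$ once $j$ is large, and no further resonance can then occur.
\end{itemize}
So only finitely many resonances occur. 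This step, like the small-divisor estimates, uses $\alpha\in DC(\gamma,\tau)$ essentially. Hence, exactly as for the paper (which invokes \cite{LYZZ}, stated for Diophantine $\alpha$), your argument does not cover arbitrary irrational $\alpha$ as the statement claims.

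Finally, if $\rho(\alpha,A)=\frac{k\alpha}{2}+\frac12 \bmod \mathbb{Z}$, the limit constant is $-\begin{pmatrix}1&\varphi\\0&1\end{pmatrix}$, and no constant conjugation changes that. The conclusion therefore holds only as an identity in $\mathrm{PSL}(2,\mathbb{R})$.
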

\begin{proof}

We start with Avila’s  Almost Reducibility Conjecture
(ARC), which says that subcritical implies almost reducibility. Although almost reducibility allows one to conjugate the dynamics of the cocycle close to a constant, it is rather convenient to have the conjugated cocycle in Schrödinger form.

\begin{theorem}  \cite[Lemma 2.1]{AJholder}\label{transtoschr}
   Let $\alpha \in \mathbb{R} \backslash \mathbb{Q}$ and $A \in C_{h_*}^\omega(\mathbb{T}, \mathrm{SL}(2, \mathbb{R}))$ for some $h_*>0$, such that $(\alpha, A)$ is almost reducible. There exists $h_0 \in\left(0, h_*\right)$ such that for any $\eta>0$, one can find $V \in C_{h_0}^\omega(\mathbb{T}, \mathbb{R})$ with $|V|_{h_0}<\eta, E \in \mathbb{R}$, and $Z \in C_{h_0}^\omega(\mathbb{T}, \operatorname{PSL}(2, \mathbb{R}))$ such that

$$
Z(\cdot+\alpha)^{-1} A(\cdot) Z(\cdot)=S_E^V(\cdot).
$$ 
\end{theorem}

\begin{theorem}  \cite[Corollary 5.4 ]{LYZZ} \label{schr-reduc}
   Let $\alpha \in DC(\gamma,\tau)$ . There exist $h_1=h_1(V, \alpha)>0, c=$ $c(V, \alpha) \in\left(0, h_1\right), \tilde{k}=\tilde{k}(V, \alpha)>0$, such that for any $E \in \Sigma_{V, \alpha}^{\text {sub }}$ satisfying $2 \rho\left(\alpha, S_E^V\right)-k \alpha \in$ $\mathbb{Z}$ with $|k| \geq \tilde{k}$, there exist $Y \in C_{\frac{c}{2 \pi}}^\omega(\mathbb{T}, \operatorname{PSL}(2, \mathbb{R}))$ and $\varphi \in \mathbb{R}$ such that

$$
Y(\cdot+\alpha)^{-1} S_E^V(\cdot) Y(\cdot)=\left(\begin{array}{cc}
1 & \varphi \\
0 & 1
\end{array}\right).
$$
\end{theorem}
Taking $U=ZY$, Theorem \ref{reduc-sl} is a direct consequence of Theorem \ref{transtoschr} and \ref{schr-reduc}.
\end{proof}

%Firstly, we focus on the subcritical regime ($(-\frac{1}{\lambda},\frac{1}{\lambda})\cap\Sigma$) of $\mathcal{H}_\theta$. 

Applying Theorem \ref{reduc-sl} to cocycle $(2\alpha, D_E^v)$, we have the following result:

\begin{theorem}\label{qrr}
 Let $\alpha \in D C$. There exist $h_1=h_1(v, \alpha)>0, c=$ $c(v, \alpha) \in\left(0, h_1\right), \tilde{k}=\tilde{k}(v, \alpha)>0$, such that for any $E \in (-\frac{1}{\lambda},\frac{1}{\lambda})\cap\Sigma$ satisfying $ \rho(2\alpha,D^v_E)-k\alpha \in$ $\mathbb{Z}$ with $|k| \geq \tilde{k}$, there exist $U_1 \in C_{\frac{c}{2 \pi}}^\omega(\mathbb{T}, \operatorname{PSL}(2, \mathbb{R}))$ and $\varphi_1 \in \mathbb{R}\backslash\{0\}$ such that
$$
U_1(\cdot+2\alpha)^{-1} D_E^v(\theta) U_1(\cdot)=\left(\begin{array}{cc}
1 & \varphi_1 \\
0 & 1
\end{array}\right).
$$
\end{theorem}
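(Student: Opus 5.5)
The plan is to obtain Theorem \ref{qrr} as a direct specialization of Theorem \ref{reduc-sl} to the cocycle $(2\alpha, D_E^v)$. Four things need checking: that the hypotheses are met, that the frequency and rotation-number normalizations agree, and, separately, that the parabolic entry $\varphi_1$ is nonzero.

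\textbf{Hypotheses of Theorem \ref{reduc-sl}.} First, $D_E^v(\theta)$ is a real trigonometric polynomial matrix with $\det D_E^v=\det S_E^v(\theta,1)\det S_E^v(\theta,0)=1$. Hence $D_E^v\in C^\omega_h(\mathbb{T},\mathrm{SL}(2,\mathbb{R}))$ for every $h>0$. Second, if $E\in(-\frac1\lambda,\frac1\lambda)\cap\Sigma$, then $|E\lambda|<1$, so Lemma \ref{exactme} gives that $(2\alpha,D_E^v)$ is subcritical. Third, the frequency is $2\alpha$ rather than $\alpha$, so I check that $2\alpha$ is Diophantine: for $n\neq0$,
\[
\inf_j|n(2\alpha)-j|=\inf_j|(2n)\alpha-j|\ge \frac{\gamma}{(2|n|)^{\tau}}=\frac{2^{-\tau}\gamma}{|n|^\tau},
\]
so $2\alpha\in DC(2^{-\tau}\gamma,\tau)$. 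Theorem \ref{reduc-sl}, together with Theorems \ref{transtoschr} and \ref{schr-reduc} in its proof, then applies with frequency $2\alpha$. The constants $h_1,c,\tilde k$ depend only on $v$ and $\alpha$, since $D_E^v$ depends analytically on $E$ and $E$ ranges over a compact set.

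\textbf{Resonance condition.} Theorem \ref{reduc-sl} with frequency $2\alpha$ requires $2\rho(2\alpha,D_E^v)-k(2\alpha)\in\mathbb{Z}$. Our hypothesis $\rho(2\alpha,D_E^v)-k\alpha\in\mathbb{Z}$ gives this after multiplying by $2$, with the same $k$ and hence the same threshold $|k|\ge\tilde k$. Theorem \ref{reduc-sl} then produces $U_1\in C^\omega_{c/2\pi}(\mathbb{T},\mathrm{PSL}(2,\mathbb{R}))$ and $\varphi_1\in\mathbb{R}$ with
\[
U_1(\cdot+2\alpha)^{-1}D_E^v U_1=\begin{pmatrix}1&\varphi_1\\0&1\end{pmatrix}.
\]

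\textbf{Nonvanishing of $\varphi_1$ (the main obstacle).} Suppose $\varphi_1=0$. Composing with the conjugacy $Z$ of Theorem \ref{transtoschr}, the Schrödinger cocycle $(2\alpha,S_{E'}^V)$, with $|V|_{h_0}<\eta$ small, would be analytically conjugate to the identity, say $S_{E'}^V=W(\cdot+2\alpha)W(\cdot)^{-1}$ with $W$ analytic. Because the second row of a Schrödinger matrix is $(1,0)$, each column of $W$ has the form $(w_i(\theta),w_i(\theta-2\alpha))^{\top}$, where $w_i$ is analytic and satisfies
\[
w_i(\theta+2\alpha)+w_i(\theta-2\alpha)+V(\theta)w_i(\theta)=E'w_i(\theta).
\]
Taking Fourier coefficients, the long-range Aubry dual of $S^V$ at a common phase would then have $E'$ as an eigenvalue with two linearly independent exponentially decaying eigenvectors. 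This contradicts the simplicity of eigenvalues of the small-potential long-range dual operator, the standard fact used in \cite{LYZZ} and Avila--You--Zhou. As an alternative, I would look in the proof of \cite[Corollary 5.4]{LYZZ} for the construction of $\varphi$, which in the resonant case is built from the $k$-th Fourier mode of the perturbation, to read off $\varphi\neq0$ directly.

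One further point needs care: the degree or half-integer shift from passing through $\mathrm{PSL}(2,\mathbb{R})$ conjugacies must not destroy the resonance indexing. I would check that the degrees of $Z$ and $Y$ only shift $k$ by a bounded amount, which can be absorbed into $\tilde k$.
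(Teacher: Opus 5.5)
\textbf{The gap is in the step $\varphi_1\neq 0$.} Your checks of the hypotheses of Theorem \ref{reduc-sl} are fine: subcriticality via Lemma \ref{exactme}, $2\alpha$ Diophantine, and the resonance $2\rho(2\alpha,D_E^v)-k(2\alpha)\in\mathbb{Z}$ with the same $k$. That is all the paper does to get $U_1$ and $\varphi_1$. The trouble starts after you conjugate by the $Z$ of Theorem \ref{transtoschr}. You are then left with a Schr\"odinger cocycle $S_{E'}^V$ whose potential $V$ is an \emph{arbitrary} small analytic function, and you appeal to ``simplicity of eigenvalues of the small-potential long-range dual''. No such general fact exists, and it is false.

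Take $V(\theta)=2\epsilon\cos4\pi\theta$ with frequency $\beta$. This is the almost Mathieu operator with frequency $2\beta$, so every gap with an odd label $k$ is collapsed. For $|k|$ large, Theorems \ref{schr-reduc} and \ref{MP} then give an analytic conjugacy to the identity at those energies. By the very duality step you use, the long-range dual $\epsilon\bigl(u(n+2)+u(n-2)\bigr)+2\cos2\pi(n\beta+\phi)u(n)$, which decouples into even and odd sublattices, then has two independent exponentially decaying eigenvectors for the same eigenvalue at suitable phases.

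More fundamentally, your argument never uses the form of $D_E^v$. If it worked, it would prove $\varphi\neq0$, and hence all gaps open, for every subcritical analytic cocycle, which is false. Your fallback has the same problem: reading off $\varphi\neq0$ from the $k$-th Fourier mode in \cite{LYZZ} requires exactly the non-degeneracy that fails in general. Simplicity via the Wronskian is available only when the dual recursion is nearest-neighbour, i.e.\ when the coefficients are trigonometric polynomials of degree one. Passing to $S_{E'}^V$ throws away precisely that structure.

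\textbf{The missing idea is to run the contradiction on $D_E^v$ itself.} If $\varphi_1=0$, then $D_E^v(\theta)U_1(\theta)=U_1(\theta+2\alpha)$. Write $U_1=(b_{ij})$ and eliminate the second row (for $E\neq0$). This gives $Eb_{2j}(\theta)=b_{1j}(\theta)+b_{1j}(\theta-2\alpha)$, and then
\[
(E^2-2-2\lambda E\cos2\pi\theta)\,b_{1j}(\theta)=b_{1j}(\theta+2\alpha)+b_{1j}(\theta-2\alpha),\qquad j=1,2.
\]
The coefficient has degree one in $e^{2\pi i\theta}$, so the Fourier coefficients satisfy the nearest-neighbour equation $\hat b(n+1)+\hat b(n-1)=\frac{E^2-2-2\cos4\pi n\alpha}{\lambda E}\hat b(n)$. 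The sequences $\hat b_{11}$ and $\hat b_{12}$ are linearly independent: otherwise $b_{21},b_{22}$ would be proportional with the same constant, and $\det U_1=0$. They are also exponentially decaying by analyticity. Yet their Wronskian is a nonzero constant, which is a contradiction.

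This is the paper's proof. Your last paragraph about degree shifts is also unnecessary, since Theorem \ref{reduc-sl} is applied directly to $D_E^v$ with the same label $k$.
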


\begin{proof}
By Theorem \ref{reduc-sl},  there exists  $U_1 \in C_{\frac{c}{2 \pi}}^\omega(\mathbb{T}, \operatorname{PSL}(2, \mathbb{R}))$ and $\varphi_1 \in \mathbb{R}$ such that 
\begin{eqnarray*}
U_1(\theta+2\alpha)^{-1}  D_E^v(\theta) U_1(\theta)=\left(\begin{array}{cc}
1 & \varphi_1 \\
0 & 1
\end{array}\right).
\end{eqnarray*}
Arguing indirectly, if $ \varphi_1 = 0$, we have   $U_1(\theta+2\alpha)^{-1}  D_E^v(\theta) U_1(\theta)=I$. 
Let  $U_1(\theta)= \left(\begin{array}{cc}
b_{11}(\theta) &    b_{12}(\theta)  \\
b_{21}(\theta)  & b_{22}(\theta) 
\end{array}\right)$. By the definition of $ D_E^v(\theta)$, we obtain
\begin{eqnarray}
&&(E^2-1-2\lambda E \cos 2\pi \theta )b_{11}(\theta)-Eb_{21}(\theta)=b_{11}(\theta+2\alpha)\label{01}\nonumber\\
&& (E-2\lambda  \cos 2\pi \theta )b_{11}(\theta)-b_{21}(\theta)=b_{21}(\theta+2\alpha)\label{02}\nonumber\\
&&(E^2-1-2\lambda E \cos 2\pi \theta )b_{12}(\theta)-Eb_{22}(\theta)=b_{12}(\theta+2\alpha)\label{03}\nonumber\\
&& (E-2\lambda  \cos 2\pi \theta )b_{12}(\theta)-b_{22}(\theta)=b_{22}(\theta+2\alpha),\label{04}\nonumber
\end{eqnarray}
which imply each element of $U_1(\theta)$ is non-zero.

Direct computation shows that 
\begin{eqnarray*}
E b_{21}(\theta) =b_{11}(\theta)+b_{11}(\theta-2\alpha), \ \ 
E b_{22}(\theta) =b_{12}(\theta)+b_{12}(\theta-2\alpha),
\end{eqnarray*}
      and consequently,  $b_{11}(\theta)\neq b_{12}(\theta)$ and  
\begin{eqnarray}\label{05}
(E^2-2-2\lambda E \cos 2\pi \theta )b_{11}(\theta)-b_{11}(\theta-2\alpha)-b_{11}(\theta+2\alpha)=0,
\end{eqnarray}
and 
\begin{eqnarray}\label{06}
(E^2-2-2\lambda E \cos 2\pi \theta )b_{12}(\theta)-b_{12}(\theta-2\alpha)-b_{12}(\theta+2\alpha)=0.
\end{eqnarray}
Taking the Fourier transformation of (\ref{05}) and  (\ref{06}), we have  $\{\hat{b}_{11}(n)\}_{ n\in \Z }$ and $\{\hat{b}_{12}(n)\}_{ n\in \Z }$ are  two linear independent solutions of  
\begin{eqnarray}\label{07}
u(n+1)+u(n-1)-\frac{E^2-2-2 \cos 4\pi n \alpha}{\lambda E}u(n)
=0,
\end{eqnarray}
which belong both to $\ell^2(\Z)$.

Moreover, by (\ref{07}), we have 
\begin{eqnarray}
\left(\begin{array}{cc}
\hat{b}_{11}(n+1) &    \hat{b}_{12}(n+1) \\
\hat{b}_{11}(n)  & \hat{b}_{12}(n) 
\end{array}\right)
=\left(\begin{array}{cc}
\frac{E^2-2-2 \cos 4\pi n \alpha}{\lambda E} &  -1 \\
1  & 0 
\end{array}\right)
\left(\begin{array}{cc}
\hat{b}_{11}(n) &    \hat{b}_{12}(n) \\
\hat{b}_{11}(n-1)  & \hat{b}_{12}(n-1) 
\end{array}\right),\nonumber
\end{eqnarray}
which implies the determinant of $\left(\begin{array}{cc}
\hat{b}_{11}(n+1) &    \hat{b}_{12}(n+1) \\
\hat{b}_{11}(n)  & \hat{b}_{12}(n) 
\end{array}\right)$ is a nozero constant. This is a contradiction to the fact that $\hat{b}_{11}(n)$ and $\hat{b}_{12}(n)$ tend to zero at $\pm \infty$ and we complete the proof.
\end{proof}

%Secondly, we turn to  the supercritical regime ($\Sigma\backslash[-\frac{1}{\lambda},\frac{1}{\lambda}]$) of $\mathcal{H}_\theta$.

Applying Theorem \ref{reduc-sl} to cocycle $(2\alpha, \hat A^E)$ and with the similar argument in Theorem \ref{qrr}, we have the following result:

\begin{theorem}\label{qrra}
 Let $\alpha \in D C$. There exist $h_1=h_1(v, \alpha)>0, c=$ $c(v, \alpha) \in\left(0, h_1\right), \tilde{k}=\tilde{k}(v, \alpha)>0$, such that for any $E \in\Sigma\backslash[-\frac{1}{\lambda},\frac{1}{\lambda}]$ satisfying $ \rho(2\alpha,\hat A^E)-k \alpha \in$ $\mathbb{Z}$ with $|k| \geq \tilde{k}$, there exist $U_2 \in C_{\frac{c}{2 \pi}}^\omega(\mathbb{T}, \operatorname{PSL}(2, \mathbb{R}))$ and $\varphi_2 \in \mathbb{R}\backslash\{0\}$ such that
$$
U_2(\cdot+2\alpha)^{-1} \hat A^E(\cdot) U_2(\cdot)=\left(\begin{array}{cc}
1 & \varphi_2 \\
0 & 1
\end{array}\right).
$$
\end{theorem}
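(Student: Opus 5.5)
The plan is to mirror the proof of Theorem \ref{qrr}, with $D_E^v$ replaced by $\hat A^E$ and the subcritical region of Lemma \ref{exactme} replaced by that of Lemma \ref{5-20-lemma4.6}.

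\emph{Step 1 (setting up).} Let $E\in\Sigma\backslash[-\frac1\lambda,\frac1\lambda]$. Then $|E\lambda|>1$, and in particular $E\neq 0$. Hence $\hat A^E\in C^\omega_h(\T,\mathrm{SL}(2,\R))$ for every $h>0$, since it is a trigonometric polynomial divided by the nonzero constant $E\lambda$. By Lemma \ref{5-20-lemma4.6}, $(2\alpha,\hat A^E)$ is subcritical. Since $\alpha\in DC(\gamma,\tau)$, the frequency $2\alpha$ is also Diophantine, with constants $\gamma 2^{-\tau}$ and $\tau$.

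\emph{Step 2 (applying Theorem \ref{reduc-sl}).} Theorem \ref{reduc-sl} is stated with $2\rho-k\alpha\in\Z$. For frequency $2\alpha$ the resonance condition reads $2\rho(2\alpha,\hat A^E)-k(2\alpha)\in\Z$. The hypothesis $\rho(2\alpha,\hat A^E)-k\alpha\in\Z$ implies this, because doubling it gives $2\rho-2k\alpha\in 2\Z\subset\Z$. Theorem \ref{reduc-sl} therefore yields $U_2\in C^\omega_{c/2\pi}(\T,\mathrm{PSL}(2,\R))$ and $\varphi_2\in\R$ with $U_2(\cdot+2\alpha)^{-1}\hat A^E U_2=\begin{pmatrix}1&\varphi_2\\0&1\end{pmatrix}$.

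One caveat concerns uniformity. Theorem \ref{reduc-sl} gives $\tilde k$ and $c$ depending on the cocycle, whereas here $E$ ranges over a set. I would argue that $\tilde k$ and $c$ can be chosen uniformly on compact pieces of the subcritical region. The reason is that the $E$-dependence is analytic and the constants in \cite{LYZZ} are locally uniform. This is the same implicit step as in Theorem \ref{qrr}, so I would treat it the same way.

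\emph{Step 3 (ruling out $\varphi_2=0$; the main step).} Suppose $\varphi_2=0$. Write $U_2=(b_{ij})$. The conjugacy gives $\hat A^E(\theta)U_2(\theta)=U_2(\theta+2\alpha)$, which has a Schrödinger form. The second row gives $b_{2j}(\theta+2\alpha)=b_{1j}(\theta)$, and the first row then gives
\[
\frac{E^2-2-2\cos2\pi\theta}{E\lambda}\,b_{1j}(\theta)-b_{1j}(\theta-2\alpha)-b_{1j}(\theta+2\alpha)=0,\qquad j=1,2.
\]
Taking Fourier coefficients, $\hat b_{1j}(n)$ satisfies the long-range/dual equation
\[
\frac{E^2-2}{E\lambda}\,u(n)-\frac{1}{E\lambda}\bigl(u(n+1)+u(n-1)\bigr)-2\cos(4\pi n\alpha)\,u(n)=0 .
\]
This is a nearest-neighbour recursion, which is simpler than in Theorem \ref{qrr}, so its transfer matrices have constant determinant.

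By analyticity of $U_2$, both sequences $\hat b_{11}$ and $\hat b_{12}$ decay exponentially. They are linearly independent because $\det U_2=1$ (up to sign in $\mathrm{PSL}$). The Wronskian of two independent solutions is a nonzero constant, yet decay forces it to tend to zero, which is a contradiction. Hence $\varphi_2\neq0$.

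The point to check carefully is independence of the Fourier sequences. If $\hat b_{12}=c\,\hat b_{11}$, then $b_{12}=c\,b_{11}$ and $b_{22}=c\,b_{21}$, so $\det U_2=0$, contradicting $\det U_2=1$. Note that the lifting from $\mathrm{PSL}$ only introduces a possible half-period sign/degree issue, which does not affect the Fourier argument if one works on the double cover $\T\to\T$, $\theta\mapsto 2\theta$ where needed. I expect that bookkeeping to be the only real obstacle.
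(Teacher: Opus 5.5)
Your proposal is correct and follows the paper's proof. You apply Theorem~\ref{reduc-sl} to the cocycle $(2\alpha,\hat A^E)$, which is subcritical by Lemma~\ref{5-20-lemma4.6}. You then rule out $\varphi_2=0$ as the paper does: the Schr\"odinger form of $\hat A^E$ makes the second row of $U_2$ the $2\alpha$-shift of the first, the resulting second-order equation is Fourier transformed, and two independent decaying solutions with a constant nonzero Wronskian give the contradiction. You also make explicit several points the paper leaves implicit: how the resonance condition passes to frequency $2\alpha$, linear independence via $\det U_2=\pm1$, the $\mathrm{PSL}$ sign bookkeeping, and the uniformity of $\tilde k,c$ in $E$, which the paper passes over silently.
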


\begin{proof}
By Theorem \ref{reduc-sl},  there exists  $U_2 \in C_{\frac{c}{2 \pi}}^\omega(\mathbb{T}, \operatorname{PSL}(2, \mathbb{R}))$ and $\varphi_2 \in \mathbb{R}$ such that 
\begin{eqnarray*}
U_2(\theta+2\alpha)^{-1}\hat A^E(\theta) U_2(\theta)=\left(\begin{array}{cc}
1 & \varphi_2 \\
0 & 1
\end{array}\right).
\end{eqnarray*}
Assume to the contrary that $\varphi_2=0$. Then, $\hat A^E(\theta)=U_2(\theta+2\alpha)U_2(\theta)^{-1}$. By the definition of $\hat A^E(\theta)$, let  $U_2(\theta)= \left(\begin{array}{cc}
c_{11}(\theta) &   c_{12}(\theta)  \\
c_{11}(\theta-2\alpha)  & c_{12}(\theta-2\alpha) 
\end{array}\right)$. Then, we have
\begin{align*}
    c_{11}(\theta-2\alpha)+c_{11}(\theta+2\alpha)-\frac{E^2-2-2\cos2\pi\theta}{E\lambda}c_{11}(\theta)&=0,\\
    c_{12}(\theta-2\alpha)+c_{12}(\theta+2\alpha)-\frac{E^2-2-2\cos2\pi\theta}{E\lambda}c_{12}(\theta)&=0.
\end{align*}
Repeating the argument used in the proof of Theorem \ref{qrr}, we arrive at a contradiction.
\iffalse
Taking the Fourier transformation, we have  $\{\hat{c}_{11}(n)\}_{ n\in \Z }$ and $\{\hat{c}_{12}(n)\}_{ n\in \Z }$ are  two linear independent solutions of  
\begin{eqnarray}\label{5-13-eq07}
u(n+1)+u(n-1)+(2E\lambda\cos4\pi n\alpha-E^2+2)u(n)=0,
\end{eqnarray}
which belong both to $\ell^2(\Z)$.

Moreover, by (\ref{5-13-eq07}), we have \small
\begin{eqnarray}
\left(\begin{array}{cc}
\hat{c}_{11}(n+1) &    \hat{c}_{12}(n+1) \\
\hat{c}_{11}(n)  & \hat{c}_{12}(n) 
\end{array}\right)
=\left(\begin{array}{cc}
E^2-2-2E\lambda\cos4\pi n\alpha &  -1 \\
1  & 0 
\end{array}\right)
\left(\begin{array}{cc}
\hat{c}_{11}(n) &    \hat{c}_{12}(n) \\
\hat{c}_{11}(n-1)  & \hat{c}_{12}(n-1) 
\end{array}\right),\nonumber
\end{eqnarray}
\normalsize
which implies the determinant of $\left(\begin{array}{cc}
\hat{c}_{11}(n+1) &    \hat{c}_{12}(n+1) \\
\hat{c}_{11}(n)  & \hat{c}_{12}(n) 
\end{array}\right)$ is a nozero constant. This is a contradiction to the fact that $\hat{c}_{11}(n)$ and $\hat{c}_{12}(n)$ tend to zero at $\pm \infty$ and we complete the proof.
\fi
\end{proof}

\subsection{ Moser-Pöschel argument for $\mathrm{SL}(2,\R)$ cocycle}

\begin{theorem}\label{5-15-theorem4.8}
Let $\alpha \in \mathbb{R} \backslash \mathbb{Q}$ and $A \in C_{h}^\omega(\mathbb{T}, \mathrm{SL}(2, \mathbb{R}))$ for some $h>0$ and  $2\rho(\alpha,A)-k\alpha \in$ $\mathbb{Z}$ , 
 if there exist $U \in C_{\frac{c}{2 \pi}}^\omega(\mathbb{T}, \operatorname{PSL}(2, \mathbb{R}))$ and $\varphi \in \mathbb{R}\backslash\{0\}$ such that
$$
U(\cdot+\alpha)^{-1} A(\cdot) U(\cdot)=\left(\begin{array}{cc}
1 & \varphi \\
0 & 1
\end{array}\right),
$$     
then the spectral gap with labeling $k$ is open.
\end{theorem}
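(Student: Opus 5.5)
This is the Moser--Pöschel argument: a small perturbation of the energy parameter pushes the conjugated parabolic cocycle into the uniformly hyperbolic regime. We work in the Schrödinger family $A = A_E$ (in our applications $D_E^v$ or $\hat A^E$), with energy $E_0$ at which the hypothesis holds. We show that for $\delta$ of one suitable sign, $(\alpha, A_{E_0+\delta})$ is uniformly hyperbolic for all small $|\delta|$. Then $E_0$ is an endpoint of an interval in the resolvent set. Since the rotation number is constant on that interval and equals its value at $E_0$ (with $2\rho=k\alpha \bmod \Z$), Theorem~\ref{thm:rotation_ids} (or Theorem~\ref{11-29-theorem5.21} for the dual family) shows that the gap carries the label $k$ and is nonempty, hence open.

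\textbf{Step 1: conjugate the perturbed cocycle.} Write $A_{E_0+\delta}(\theta) = A_{E_0}(\theta) + \delta P(\theta) + O(\delta^2)$, where $P = \partial_E A_E$. For a Schrödinger cocycle, $P = \begin{pmatrix}1&0\\0&0\end{pmatrix}$. For $D_E^v$ or $\hat A^E$, $P$ is an explicit analytic matrix, and it is semi-definite after multiplication by the symplectic form $J$ in the relevant energy range. Conjugating by $U$ gives
\[
U(\theta+\alpha)^{-1}A_{E_0+\delta}(\theta)U(\theta) = \begin{pmatrix}1&\varphi\\0&1\end{pmatrix}\bigl(I + \delta Q(\theta) + O(\delta^2)\bigr),
\]
with $Q = \begin{pmatrix}1&\varphi\\0&1\end{pmatrix}^{-1}U(\theta+\alpha)^{-1}P(\theta)U(\theta)$.

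\textbf{Step 2: average the perturbation.} Remove the nonconstant Fourier modes of $Q$ at order $\delta$ by one KAM-type homological step. For the parabolic constant the homological equation involves only $e^{2\pi i n\alpha}-1$ denominators, and the resulting small-divisor loss is absorbed by the analytic strip width $c/2\pi$ under the Diophantine assumption. This reduces the perturbed cocycle to a constant
\[
\begin{pmatrix}1+\delta a & \varphi+\delta b\\ \delta d & 1+\delta e\end{pmatrix}+O(\delta^2),
\]
where $d = [Q]_{21}$ is the $\theta$-average of the lower-left entry. Its trace is $2+\delta(a+e)+\delta\varphi d+O(\delta^2)$, and $\det=1$ forces $a+e=O(\delta)$. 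The trace is therefore $2+\delta\varphi d+O(\delta^2)$. Hence if $d\neq 0$, then $|\operatorname{tr}|>2$ for $\delta$ of sign $\operatorname{sign}(\varphi d)$. The constant is hyperbolic with an $O(\sqrt{|\delta|})$ splitting, which dominates the $O(\delta^2)$ remainder after one more averaging step. The cocycle is therefore uniformly hyperbolic, and so is the original one, being conjugate to it.

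\textbf{Step 3 (main obstacle): show $d\neq 0$.} Writing $U(\theta)=\begin{pmatrix}u_{11}&u_{12}\\u_{21}&u_{22}\end{pmatrix}$, the entry $d$ equals, up to sign, $\int_{\T}\langle J P(\theta)\, u_1(\theta), u_1(\theta)\rangle\,d\theta$ with $u_1=(u_{11},u_{21})^{\top}$. In the Schrödinger case this is $\int |u_{11}|^2\,d\theta>0$. For our families we plan to use the explicit form of $\partial_E D_E^v$ and $\partial_E\hat A^E$ to show that the integrand is of one sign and not identically zero; otherwise $u_{11}\equiv 0$, contradicting invertibility of $U$ through the cocycle relation. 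If definiteness fails, we will fall back on monotonicity of the rotation number in $E$, which makes $d$ nonzero whenever the rotation number is strictly monotone. The condition $\varphi\neq 0$ is used exactly here: with $\varphi=0$ the trace correction vanishes at first order and the argument breaks down.
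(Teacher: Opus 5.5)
Your argument is correct in outline, but it takes a genuinely different route from the paper. The paper never perturbs the energy. It conjugates $(\alpha,A)$ to a Schrödinger cocycle $S^V_{E'}$ with small $V$ (Theorem~\ref{transtoschr}), then quotes Theorem~\ref{schr-reduc} and Eliasson's Moser--Pöschel statement (Theorem~\ref{MP}). That is short, but the reduction is made at a single energy. Theorem~\ref{MP} is therefore a statement about gaps of the operator with potential $V$, and nothing in the reduction relates varying $E$ in $D^v_E$ or $\hat A^E$ to varying $E'$ in $S^V_{E'}$. You instead run the Moser--Pöschel perturbation inside the actual energy family. This isolates exactly the input such a transfer would need: a definite sign for the first-order variation of the family in $E$. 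The price is that you must check that sign by hand for $D^v_E$ and $\hat A^E$. Your averaging step also uses $\alpha\in DC$, as the paper does through Theorem~\ref{MP}.

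\textbf{Correction in Step 3.} This matters because Step 3 is the crux. From $A_{E_0}U=U(\cdot+\alpha)B$ with $B=\begin{pmatrix}1&\varphi\\0&1\end{pmatrix}$ you get $Q=U^{-1}A_{E_0}^{-1}PU$, hence
\[
Q_{21}(\theta)=\bigl\langle Ju_1(\theta),\,A_{E_0}(\theta)^{-1}P(\theta)\,u_1(\theta)\bigr\rangle,\qquad J=\begin{pmatrix}0&-1\\1&0\end{pmatrix}.
\]
So the relevant form is $J^{\top}A_{E_0}^{-1}\partial_E A_{E_0}$, not $JP$.
\begin{itemize}
\item With $JP$, even the Schrödinger case gives $\pm u_{11}u_{21}$, which is indefinite and contradicts the value $\int u_{11}^2$ you state.
\item For $D^v_E$, the form $JP$ is definite only where $|E-\lambda\cos 2\pi\theta|<1$. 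This is presumably the source of your hedge about the energy range.
\end{itemize}
With the correct form no restriction is needed:
\[
\hat A^E:\ Q_{21}=-\lambda^{-1}\Bigl(1+\frac{2+2\cos2\pi\theta}{E_0^2}\Bigr)u_{11}^2,\qquad D^v_E:\ Q_{21}=-u_{11}^2-\bigl((E_0-2\lambda\cos2\pi\theta)u_{11}-u_{21}\bigr)^2.
\]
\begin{itemize}
\item The $D^v_E$ expression is strictly negative for every $\theta$, since $u_1(\theta)\neq0$.
\item The $\hat A^E$ expression is not identically zero. The cocycle relation (frequency $2\alpha$) gives $u_{21}(\theta+2\alpha)=u_{11}(\theta)$, so $u_{11}\equiv0$ would make the first column of $U$ vanish.
\end{itemize}
Thus $d\neq0$ in both cases. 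Your rotation-number fallback is then unnecessary; as stated it would not yield $d\neq0$ without a separate second-order argument.

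\textbf{Smaller points.}
\begin{itemize}
\item With your displayed matrix, $\det=1$ gives $a+e=\varphi d+O(\delta)$, not $O(\delta)$, and the trace is $2+\delta(a+e)$. Your final formula $2+\delta\varphi d+O(\delta^2)$ is right, but the bookkeeping is off.
\item Compare the hyperbolicity with the remainder only after conjugating by $\operatorname{diag}(|\delta|^{-1/4},|\delta|^{1/4})$. There the hyperbolicity rate is $\sqrt{|\delta|}$ and the remainder becomes $O(|\delta|^{3/2})$.
\item $Q_{21}$ is pointwise definite: directly for $D^v_E$, and for $\hat A^E$ after two iterates, since $Q_{21}(\theta)+Q_{21}(\theta+2\alpha)<0$. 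Hence this rescaling plus an invariant cone field lets you skip the averaging step and the Diophantine condition for this theorem.
\end{itemize}
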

\begin{proof}
Note that $(\alpha, A)$ can be conjugated to 
Schrödinger form  by Theorem \ref{transtoschr}. Then the desired result is obtained by Theorem \ref{schr-reduc}
and the  Moser-Pöschel argument for Schr\"odinger cocycle:

\begin{theorem}\cite{Eliasson1, Eliasson}\label{MP}
    Let $\alpha \in \mathrm{DC}_d(\gamma, \tau)$ with $R> 0, $ and $V \in C^\omega\left(\mathbb{T}^d, \mathbb{R}\right)$ be a non-constant function. Let $E$ be an edge point of some spectral gap $G(V)$. Assume that there are $\zeta \in\R$ and $X \in C_R^\omega\left(\mathbb{T}^d, \operatorname{PSL}(2, \mathbb{R})\right)$ such that
    $$
X(\cdot+\alpha)^{-1} S_E^V(\cdot) X(\cdot)=\left(\begin{array}{ll}
1 & \zeta \\
0 & 1
\end{array}\right)
$$
then 
$\zeta=0$ if and only if the  gap collapses. 
\end{theorem}

\begin{remark}
    The  arithmetic version of the Moser-Pöschel argument for Schr\"odinger cocycle was firstly shown in Theorem 6.2 in \cite{LYZZ}. And it is useful in estimating the size of spectral gaps\cite{HHSY}.
\end{remark}
\end{proof}

\section{Proof of Cantor spectrum}\label{Cantor spectrum}

We now complete the proof of Theorem \ref{5-15-theorem1.1} by contradiction, unifying reducibility, Aubry duality, analytic spectral methods, and KAM-theoretic techniques. Assume $(E_1, E_2) \subseteq \Sigma$ is an interval. We proceed by case analysis:

\textbf{Case 1: $(E_1, E_2) \subseteq \Sigma \cap (-\lambda^{-1}, \lambda^{-1})$} \\
By Theorem \ref{thm:rotation_ids}, $\rho(T_\alpha, S^v_E)$ is strictly decreasing and continuous on $(E_1, E_2)$. Since the set $\{ \langle k, \alpha \rangle \mod \mathbb{Z} \mid k \in \mathbb{Z} \}$ is dense in $[0,1]$, there exists $k \in \mathbb{Z}$ satisfying
\[
\rho(T_\alpha, S^v_{E_2}) < \frac{\langle k, \alpha \rangle}{2} < \rho(T_\alpha, S^v_{E_1}).
\]
 Using the identity $2\rho(T_\alpha, S^v_{E}) = \rho(2\alpha, D^v_E)$, we obtain
\[
\rho(2\alpha, D^v_{E_2}) < {\langle k, \alpha \rangle} < \rho(2\alpha, D^v_{E_1}).
\]
By Theorem \ref{qrr} (Reducibility of $(2\alpha, D^v_E)$) and Theorem \ref{MP} (Moser-Pöschel argument), an open spectral gap labeled by $k$ exists in $(E_1, E_2)$, contradicting $(E_1, E_2) \subseteq \Sigma \cap (-\lambda^{-1}, \lambda^{-1})$. Thus $\Sigma \cap (-\lambda^{-1}, \lambda^{-1})$ is a Cantor set.

\textbf{Case 2: $(E_1, E_2) \subseteq \Sigma \cap (-\infty, -\lambda^{-1})$} \\
By Aubry duality and Lemma \ref{lem:spectral_equivalence} , the spectral identities
\[
\sigma(\mathcal{H}_\theta) = \sigma(\mathcal{S}_\theta) = \sigma(\widehat{\mathcal{S}}_\theta) = \operatorname{supp}  dn = \operatorname{supp}  d\hat{n} = \Sigma
\]
demonstrate that establishing the Cantor structure of $\sigma(\mathcal{H}_\theta)$ suffices to prove $\sigma(\widehat{\mathcal{S}}_{\theta})$ is a Cantor set. Spectral analysis of $\widehat{\mathcal{S}}_{\theta}$ reduces to studying the cocycle $({2\alpha}, \hat{A}^E)$. By Proposition \ref{prop:weak_convergence} and Theorem \ref{11-29-theorem5.21}, $\rho(2\alpha, \hat{A}^E)$ is strictly decreasing and continuous on $(E_1, E_2)$. Since $\{ \langle k, \alpha \rangle \mod \mathbb{Z} \mid k \in \mathbb{Z} \}$ is dense in $[0,1]$, there exists $k \in \mathbb{Z}$ satisfying
\[
\rho(2\alpha, \hat{A}^{E_2}) < \langle k, \alpha \rangle < \rho(2\alpha, \hat{A}^{E_1}).
\]
Theorems \ref{qrra} and \ref{5-15-theorem4.8} imply an open spectral gap labeled by $k$ in $(E_1, E_2)$, contradicting $(E_1, E_2) \subseteq \Sigma \cap (-\infty, -\lambda^{-1})$. Hence $\Sigma \cap (-\infty, -\lambda^{-1})$ is a Cantor set.

\textbf{Case 3: $(E_1, E_2) \subseteq \Sigma \cap (\lambda^{-1}, +\infty)$} \\
The symmetric argument to Case 2 establishes that $\Sigma \cap (\lambda^{-1}, +\infty)$ is a Cantor set.

\section*{Acknowledgments}
 Jiawei He was supported by the National Natural Science Foundation of China (Grant No. 12501247), the Natural Science Foundation of Fujian Province (Grant No. 2025J08232), the  Startup Fund for Advanced Talents of Putian University (Grant No. 2023120), and the Fujian Alliance of Mathematics (Grant No. 2025SXLMMS07). Yongjian Wang is supported by the NSFC grant (12401208) and the Natural Science Foundation of Jiangsu Province (Grants No BK20241431).

\section*{Statements and Declarations}
{\bf Conflict of Interest} 
The authors declare no conflicts of interest.
				
\vspace{0.2in}
{\bf Data Availability}
Data sharing is not applicable to this article as no new data were created or analyzed in this study.

\end{document}